\newtheorem{theorem}{Theorem}
\newtheorem{conjecture}{Conjecture}
\newtheorem{proposition}{Proposition}
\newtheorem{claim}{Claim}
\newtheorem{lemma}{Lemma}
\newtheorem{definition}{Definition}
\newtheorem{corollary}{Corollary}
\newtheorem{fact}{Fact}
\newtheorem{observation}{Observation}
\newtheorem{remark}{Remark}
\newtheorem{algorithm}{Algorithm}
\newcommand{\gnote}[1]{}
\newcommand{\ynote}[1]{}
\title{Being Corrupt Requires Being Clever, But Detecting Corruption Doesn't}
\author{Yan Jin \footnote{Email: \url{yjin1@mit.edu}. Partially supported by Institute for Data, Systems and Society Fellowship and ARO MURI award No.W911NF-12-1-0509.}\\MIT  
\and Elchanan Mossel \footnote{Email: \url{elmos@mit.edu}. Partially supported by awards ONR N00014-16-1-2227, NSF CCF1665252
and DMS-1737944.}\\MIT 
\and Govind Ramnarayan\footnote{Email: \url{govind@mit.edu}. Partially supported by awards
    NSF CCF 1665252 and DMS-1737944.}\\MIT}
\begin{document}
\maketitle

\begin{abstract}
We consider a variation of the problem of corruption detection on networks posed by Alon, Mossel, and Pemantle '15. In this model, each vertex of a graph can be either truthful or corrupt. Each vertex reports about the types (truthful or corrupt) of all its neighbors to a central agency, where truthful nodes report the true types they see and corrupt nodes report adversarially. The central agency aggregates these reports and attempts to find a single truthful node. Inspired by real auditing networks, we pose our problem for arbitrary graphs and consider corruption through a computational lens. We identify a key combinatorial parameter of the graph $m(G)$, which is the minimal number of corrupted agents needed to prevent the central agency from identifying a single truthful node. 
We give an efficient (in fact, linear time) algorithm for the central agency to identify a truthful node that is successful whenever the number of corrupt nodes is less than $m(G)/2$. On the other hand, we prove that for any constant $\alpha > 1$,  it is NP-hard to find a subset of nodes $S$ in $G$ such that corrupting $S$ prevents the central agency from finding one truthful node and $|S| \leq \alpha m(G)$, assuming the Small Set Expansion Hypothesis (Raghavendra and Steurer, STOC '10). We conclude that being corrupt requires being clever, while detecting corruption does not.

Our main technical insight is a relation between the minimum number of corrupt nodes required to hide all truthful nodes and a certain notion of vertex separability for the underlying graph. Additionally, this insight lets us design an efficient algorithm for a corrupt party to decide which graphs require the fewest corrupted nodes, up to a multiplicative factor of $O(\log n)$.
\end{abstract}

\section{Introduction}
\label{sec:intro}
\subsection{Corruption Detection and Problem Set-up}
\label{sec:problem-set-up}

We study the problem of identifying truthful nodes in networks, in the model of \emph{corruption detection on networks} posed by Alon, Mossel, and Pemantle~\cite{AMP15}. In this model, we have a network represented by a (possibly directed) graph. Nodes can be \emph{truthful} or \emph{corrupt}. Each node audits its outgoing neighbors to see whether they are truthful or corrupt, and sends reports of their identities to a central agency. The central agent, who is not part of the graph, aggregates the reports and uses them to identify truthful and corrupt nodes. Truthful nodes report truthfully (and correctly) on their neighbors, while corrupt nodes have no such restriction: they can assign arbitrary reports to their neighbors, regardless of whether their neighbors are truthful or corrupt, and coordinate their efforts with each other to prevent the central agency from gathering useful information.

In~\cite{AMP15}, the authors consider the problem of recovering the identities of almost all nodes in a network in the presence of many corrupt nodes; specifically, when the fraction of corrupt nodes can be very close to $1/2$. They call this the \emph{corruption detection} problem. They show that the central agency can recover the identity of most nodes correctly even in certain bounded-degree graphs, as long as the underlying graph is a sufficiently good expander. The required expansion properties are known to hold for a random graph or Ramanujan graph of sufficiently large (but constant) degree, which yields undirected graphs that are amenable to corruption detection. Furthermore, they show that some level of expansion is necessary for identifying truthful nodes, by demonstrating that the corrupt nodes can stop the central agency from identifying any truthful node when the graph is a very bad expander (e.g. a cycle), even if the corrupt nodes only make up $0.01$ fraction of the network. 

This establishes that very good expanders are very good for corruption detection, and very bad expanders can be very bad for corruption detection. We note that this begs the question of how effective graphs that do not fall in either of these categories are for corruption detection. In the setting of~\cite{AMP15}, we could ask the following: given an \emph{arbitrary} undirected graph, what is the smallest number of corrupt nodes that can prevent the identification of almost all nodes? When there are fewer than this number, can the central agency \emph{efficiently} identify almost all nodes correctly? Alon, Mossel, and Pemantle study these questions for the special cases of highly expanding graphs and poorly expanding graphs, but do not address general graphs.

Additionally,~\cite{AMP15} considers corruption detection when the corrupt agencies can choose their locations and collude arbitrarily, with no bound on their computational complexity. This is perhaps overly pessimistic: after all, it is highly unlikely that corrupt agencies can solve NP-hard problems efficiently
and if they can, thwarting their covert operations is unlikely to stop their world domination. We suggest a model that takes into account computational considerations, by factoring in the computation time required to select the nodes in a graph that a corrupt party chooses to control. This yields the following question from the viewpoint of a corrupt party: given a graph, can a corrupt party compute the smallest set of nodes it needs to corrupt \emph{in polynomial time}? 

In addition to being natural from a mathematical standpoint, these questions are also well-motivated socially. It would be na\"{i}ve to assert that we can weed out corruption in the real world by simply designing auditing networks that are expanders. Rather, these networks may already be formed, and infeasible to change in a drastic way. Given this, we are less concerned with finding certain graphs that are good for corruption detection, but rather discerning how good \emph{existing} graphs are; specifically, how many corrupt nodes they can tolerate. In particular, since the network structure could be out of the control of the central agency, algorithms for the central agency to detect corruption on arbitrary graphs seem particularly important.

It is also useful for the \emph{corrupt} agency to have an algorithm with guarantees for any graph. Consider the following example of a corruption detection problem from the viewpoint of a corrupt organization. Country A wants to influence policy in country B, and wants to figure out the most efficient way to place corrupted nodes within country B to make this happen. However, if the central government of B can confidently identify truthful nodes, they can weight those nodes' opinions more highly, and thwart country A's plans. Hence, the question country A wants to solve is the following: given the graph of country B, can country A compute the optimal placement of corrupt nodes to prevent country B from finding truthful nodes? We note that in this question, too, the graph of country B is fixed, and hence, country A would like to have an algorithm that takes as input \emph{any} graph and computes the optimal way to place corrupt nodes in order to hide all the truthful nodes. 

We study the questions above for a variant of the corruption detection problem in~\cite{AMP15}, in which the goal of the central agency is to find a single truthful node. While this goal is less ambitious than the goal of identifying almost all the nodes, we think it is a very natural question in the context of corruption. For one, if the central agency can find a single truthful node, they can use the trusted reports from that node to identify more truthful and corrupt nodes that it might be connected to. The central agency may additionally weight the opinions of the truthful nodes more when making policy decisions (as alluded to in the example above), and can also incentivize truthfulness by rewarding truthful nodes that it finds and giving them more influence in future networks if possible (by increasing their out-degrees). Moreover, our proofs and results extend to finding larger number of truthful nodes as we discuss below. 

Our results stem from a tie between the problem of finding a single truthful node in a graph and a measure of vertex separability of the graph. This tie not only yields an efficient and relatively effective algorithm for the central agency to find a truthful node, but also allows us to relate corrupt party's strategy to the problem of finding a good vertex separator for the graph. Hence, by analyzing the purely graph-theoretic problem of finding a good vertex separator, we can characterize the difficulty of finding a good set of nodes to corrupt. Similar notions of vertex separability have been studied previously (e.g.~\cite{Lee17, ORS07, BMN15}), and we prove NP-hardness for the notion relevant to us assuming the Small Set Expansion Hypothesis (SSEH). The \emph{Small Set Expansion Hypothesis} is a hypothesis posed by Raghavendra and Steurer~\cite{RS10} that is closely related to the famous Unique Games Conjecture of Khot~\cite{Khot02}. In fact,~\cite{RS10} shows that the SSEH implies the Unique Games Conjecture. The SSEH yields hardness results that are not known to follow directly from the UGC, especially for graph problems like sparsest cut and treewidth (\cite{RST12} and \cite{APW12} respectively), among others.

\subsection{Our Results} 
\label{sec:results}
We now outline our results more formally. We analyze the variant of corruption detection where the central agency's goal is to find a single truthful node. First, we study how effectively the central agency can identify a truthful node on an arbitrary graph, given a set of reports. Given an undirected graph\footnote{Unless explicitly specified, all graphs are undirected by default.} $G$, we let $m(G)$ denote the minimal number of corrupted nodes required to stop the central agency from finding a truthful node, where the minimum is taken over all strategies of the corrupt party (not just computationally bounded ones). We informally call $m(G)$ the ``critical'' number of corrupt nodes for a graph $G$. Then, we show the following:

\newtheorem*{thm:1}{Theorem \ref{thm:1}}
\begin{thm:1}
Fix a graph $G$ and suppose that the corrupt party has a budget $b \leq m(G) / 2$. Then the central agency can identify a truthful node, regardless of the strategy of the corrupt party, and without knowledge of either $m(G)$ or $b$. Furthermore, the central agency's algorithm runs in linear time (in the number of edges in the graph $G$).
\end{thm:1}


Next, we consider the question from the viewpoint of the corrupt party: can the corrupt party efficiently compute the most economical way to allocate nodes to prevent the central agency from finding a truthful node? Concretely, we focus on a natural decision version of the question: given a graph $G$ and a upper bound on the number of possible corrupted nodes $k$, can the corrupt party prevent the central agency from finding a truthful node? 

We actually focus on an easier question: can the corrupt party accurately compute $m(G)$, the minimum number of nodes that they need to control to prevent the central agency from finding a truthful node? Not only do we give evidence that computing $m(G)$ exactly is computationally hard, but we also provide evidence that $m(G)$ is hard to approximate. Specifically, we show that approximating $m(G)$ to any constant factor is NP-hard under the Small Set Expansion Hypothesis (SSEH); or in other words, that it is SSE-hard.

\newtheorem*{thm:2}{Theorem \ref{thm:2}}
\begin{thm:2}
For every $\beta > 1$, there is a constant $\epsilon > 0$ such that the following is true. Given a graph $G = (V,E)$, it is SSE-hard to distinguish between the case where $m(G) \leq \epsilon \cdot |V|$ and $m(G) \geq \beta \cdot \epsilon \cdot |V|$. Or in other words, the problem of approximating the critical number of corrupt nodes for a graph to within any constant factor is SSE-hard.
\end{thm:2}

This Theorem immediately implies the following Corollary \ref{corr:1}.

\newtheorem*{corr:1}{Corollary \ref{corr:1}}
\begin{corr:1}
Assume the SSE Hypothesis and that P $\neq$ NP. Fix any $\beta>1$. There does not exist a polynomial-time algorithm that takes as input an arbitrary graph $G = (V,E)$ and outputs a set of nodes $S$ with size $|S|\leq O(\beta \cdot m(G))$, such that corrupting $S$ prevents the central agency from finding a truthful node.
\end{corr:1}


We note that in Corollary \ref{corr:1}, the bad party's input is only the graph $G$: 
specifically, they do not have knowledge about the value of $m(G)$.

Our proof for Theorem~\ref{thm:2} is similar to the proof of Austrin, Pitassi, and Wu~\cite{APW12} for the SSE-hardness of approximating treewidth. This is not a coincidence: in fact, the ``soundness'' in their reduction involves proving that their graph does not have a good $1/2$ vertex separator, where the notion of vertex separability (from~\cite{BGHK95}) is very related to the version we use to categorize the problem of hiding a truthful vertex. We give the proof of Theorem~\ref{thm:2} in Section~\ref{sec:hardness-proof}.

However, if one allows for an approximation factor of $O(\log |V|)$, then $m(G)$ can be approximated efficiently. Furthermore, this yields an approximation algorithm that the corrupt party can use to find a placement that hinders detection of a truthful node.
\newtheorem*{thm:3}{Theorem \ref{thm:3}}
\begin{thm:3}
There is a polynomial-time algorithm that takes as input a graph $G = (V,E)$ and outputs a set of nodes $S$ with size $|S|\leq O(\log |V|\cdot m(G))$, such that corrupting $S$ prevents the central agency from finding a truthful node.
\end{thm:3}


The proof of Theorem~\ref{thm:3}, given in Section~\ref{sec:hardness-proof}, uses a bi-criterion approximation algorithm for the $k$-vertex separator problem given by~\cite{Lee17}. As alluded to in Section~\ref{sec:problem-set-up}, Theorems~\ref{thm:2} and~\ref{thm:3} both rely on an approximate characterization of $m(G)$ in terms of a measure of vertex separability of the graph $G$, which we give in Section~\ref{sec:main-results}. 

Additionally, we note that we can adapt Theorems~\ref{thm:1} and~\ref{thm:2} (as well as Corollary~\ref{corr:1}) to a more general setting, where the central agency wants to recover some arbitrary number of truthful nodes, where the number of nodes can be proportional to the size of the graph. We describe how to modify our proofs to match this more general setting in Section~\ref{sec:find-many-truthful}.

Together, Theorems~\ref{thm:1} and~\ref{thm:2} uncover a surprisingly positive result for corruption detection: it is computationally easy for the central agency to find a truthful node when the number of corrupted nodes is only somewhat smaller than the ``critical'' number for the underlying graph, but it is in general computationally hard for the corrupt party to hide the truthful nodes even when they have a budget that far exceeds the ``critical'' number for the graph.

\paragraph{Results for Directed Graphs}
As noted in~\cite{AMP15}, it is unlikely that real-world auditing networks are undirected. For example, it is likely that the FBI has the authority to audit the Cambridge police department, but it is also likely that the reverse is untrue. Therefore, we would like the central agency to be able to find truthful nodes in directed graphs in addition to undirected graphs. We notice that the algorithm we give in Theorem~\ref{thm:1} extends naturally to directed graphs. 

\newtheorem*{thm:4}{Theorem \ref{thm:4}}
\begin{thm:4}
Fix a directed graph $D$ and suppose that the corrupt party has a budget $b \leq m(D) / 2$. Then the central agency can identify a truthful node, regardless of the strategy of the corrupt party, and without the knowledge of either $m(D)$ or $b$. Furthermore, the central agency's algorithm runs in linear time.
\end{thm:4}


The proof of Theorem~\ref{thm:4} is similar to the proof of Theorem~\ref{thm:1}, and effectively relates the problem of finding a truthful node on directed graphs to a similar notion of vertex separability, suitably generalized to directed graphs.

\paragraph{Results for Finding An Arbitrary Number of Good Nodes}
In fact, the problem of finding one good node is just a special case of finding an arbitrary number of good nodes, $g$, on the graph $G$. We define $m(G,g)$ as the minimal number of bad nodes required to prevent the identification of $g$ good nodes on the graph $G$. We relate it to an analogous vertex separation notion, and prove the following two theorems, which are extensions of Theorems~\ref{thm:1} and~\ref{thm:2} to this setting.

\newtheorem*{thm:efficient_alg}{Theorem \ref{thm:efficient_alg}}
\begin{thm:efficient_alg}
Fix a graph $G$ and the number of good nodes to recover, $g$. Suppose that the corrupt party has a budget $b\leq m(G,g) / 2$. If $g<|V|-2b,$ then the central agency can identify $g$ truthful nodes, regardless of the strategy of the corrupt party, and without knowledge either of $m(G,g)$ or $b$. Furthermore, the central agency's algorithm runs in linear time. 
\end{thm:efficient_alg}

\newtheorem*{thm:sse_hard}{Theorem \ref{thm:sse_hard}}
\begin{thm:sse_hard}
For every $\beta>1$ and every $0<\delta<1$ 
, there is a constant $\epsilon>0$ such that the following is true. Given a graph $G=(V,E)$, it is SSE-hard to distinguish between the case where $m(G,\delta |V|)\leq \epsilon \cdot |V|$ and $m(G,\delta|V|)\geq \beta \cdot \epsilon \cdot |V|.$ Or in other words, the problem of approximating the critical number of corrupt nodes such that it is impossible to find $\delta |V|$ good nodes within any constant factor is SSE-hard.
\end{thm:sse_hard}

The proof of Theorem~\ref{thm:sse_hard} is similar to the proof of Theorem~\ref{thm:1}, and the hardness of approximation proof also relies on the same graph reduction and SSE conjecture. Proofs are presented in Section \ref{sec:find-many-truthful}.

\subsection{Related Work}
\label{sec:related-work}
The model of corruptions posed by~\cite{AMP15} is identical to a model first suggested by Perparata, Metze, and Chien~\cite{PMC67}, who introduced the model in the context of detecting failed components in digital systems. This work (as well as many follow-ups, e.g.~\cite{Kameda75,KR80}) looked at the problem of characterizing which networks can detect a certain number of corrupted nodes. Xu and Huang~\cite{XH95} give necessary and sufficient conditions for identifying a single corrupted node in a graph, although their characterization is not algorithmically efficient. There are many other works on variants of this problem (e.g. \cite{Sull84,DM84}), including recovering node identities with one-sided or two-sided error probabilities in the local reports~\cite{MH76} and adaptively finding truthful nodes~\cite{HA74}.

We note that our model of a computationally bounded corrupt party and our stipulation that the graph is fixed ahead of time rather than designed by the central agency, which are our main contributions to the model, seem more naturally motivated in the setting of corruptions than in the setting of designing digital systems. Even the question of identifying a single truthful node could be viewed as more naturally motivated in the setting of corruptions than in the setting of diagnosing systems. We believe there are likely more interesting theoretical questions to be discovered by approaching the PMC model through a corruptions lens. 

The identifiability of a single node in the corruptions setting was studied in a recent paper of Mukwembi and Mukwembi~\cite{MM17}. They give a linear time greedy algorithm to recover the identify of a single node in many graphs, \emph{provided that corrupt nodes always report other corrupt nodes as truthful}. Furthermore, this assumption allows them to reduce identifying all nodes to identifying a single node. They argue that such an assumption is natural in the context of corruptions, where corrupt nodes are selfishly incentivized not to out each other. However, in our setting, corrupt nodes can not only betray each other, but are in fact incentivized to do so for the good of the overarching goal of the corrupt party (to prevent the central agency from identifying a truthful node). Given~\cite{MM17}, it is not a surprise that the near-optimal strategies we describe for the corrupt party in this paper crucially rely on the fact that the nodes can report each other as corrupt.

Our problem of choosing the best subset of nodes to corrupt bears intriguing similarities to the problem of influence maximization studied by~\cite{KKT15}, where the goal is to find an optimal set of nodes to target in order to maximize the adoption of a certain technology or product. It is an interesting question to see if there are further similarities between these two areas. Additionally, social scientists have studied corruption extensively (e.g.\cite{Fjeldstad03}, \cite{Nielsen03}), though to the best of our knowledge they have not studied it in the graph-theoretic way that we do in this paper.

\subsection{Comparison to Corruption in Practice}
Finally, we must address the elephant in the room. Despite our theoretical results, corruption \emph{is} prevalent in many real-world networks, and yet in many scenarios it is not easy to pinpoint even a single truthful node. One reason for that is that some of assumptions do not seem to hold in some real world networks. For example, we assume that audits from the truthful nodes are not only non-malicious, but also perfectly reliable. In practice this assumption is unlikely to be true: many truthful nodes could be non-malicious but simply unable to audit their neighbors accurately. Further assumptions that may not hold in some scenarios include the notion of a central agency that is both uncorrupted and has access to reports from every agency, and possibly even the assumption that the number of corrupt nodes is less than $|V| / 2$. In addition, networks $G$ may have very low critical numbers $m(G)$ in practice. 
For example, there could be a triangle (named, ``President'', ``Congress'' and ``Houses'') that is all corrupt and cannot be audited by any agent outside the triangle.  
 It is thus plausible that a corrupt party could use the structure of realistic auditing networks for their corruption strategy to overcome our worst-case hardness result. 

While this points to some shortcomings of our model, it also points out ways to change policy that would potentially bring the real world closer to our idealistic scenario, where a corrupt party has a much more difficult computational task than the central agency. For example, we can speculate that perhaps information should be gathered by a transparent  centralized agency, that significant resources should go into ensuring that the centralized agency is not corrupt, and that networks ought to have good auditing structure (without important agencies that can be audited by very few nodes). 

\section{Preliminaries}
\subsection{General Preliminaries}
We denote undirected graphs by $G=(V, E)$, where $V$ is the vertex set of the graph and $E$ is the edge set. We denote directed graphs by $D = (V,E_D)$. When the underlying graph is clear, we may drop the subscripts. Given a vertex $u$ in an undirected graph $G$, we let $\mathcal{N}(u)$ denote the \emph{neighborhood} (set of neighbors) of the vertex in $G$. Similarly, given a vertex $u$ in a directed graph $D$, let $\mathcal{N}(u)$ denote the set of \emph{outgoing} neighbors of $u$: that is, vertices $v \in V$ such that $(u,v) \in E_D$.

\subsubsection{Vertex Separator}
\begin{definition}
\textbf{(k-vertex separator)}(\cite{ORS07},\cite{BMN15}) For any $k \geq 0$, we say a subset of vertices $U\subseteq V$ is k-vertex separator of a graph $G$, if after removing $U$ and incident edges, the remaining graph forms a union of connected components, each of size at most $k$.

Furthermore, let $$S_G(k)=\min \big( |U| : U \textrm{ is a } k\textrm{-vertex separator of } G \big)$$ denote the size of the minimal $k$-vertex separator of graph $G$.
\end{definition}

\subsubsection{Small Set Expansion Hypothesis}
In this section we define the Small Set Expansion (SSE) Hypothesis introduced in \cite{RS10}. Let $G=(V,E)$ be an undirected $d$-regular graph. 
\begin{definition}[Normalized edge expansion]
For a set $S\subseteq V$ of vertices, denote $\Phi_G(S)$ as the normalized edge expansion of $S$,
\begin{equation*}
\Phi_G(S)=\frac{|E(S,V\backslash S)|}{d|S|},
\end{equation*}
where $|E(S,V\backslash S)|$ is the number of edges between $S$ and $V\backslash S.$
\end{definition}

\textit{The Small Set Expansion Problem} with parameters $\eta$ and $\delta$, denoted SSE($\eta,\delta$), asks whether $G$ has a small set $S$ which does not expand or all small sets of $G$ are highly expanding. 
\begin{definition}[(SSE($\eta,\delta$))]
Given a regular graph $G=(V,E),$ distinguish between the following two cases:
\begin{itemize}
\item \textbf{Yes} There is a set of vertices $S\subseteq V$ with $S=\delta |V|$ and $\Phi_G(S)\leq \eta$

\item \textbf{No} For every set of vertices $S\subseteq V$ with $S=\delta |V|$ it holds that $\Phi_G(S)\geq 1-\eta$
\end{itemize}
\end{definition}

The \textit{Small Set Expansion Hypothesis} is the conjecture that deciding SSE($\eta,\delta$) is NP-hard.

\begin{conjecture}[Small Set Expansion Hypothesis \cite{RS10}]
For every $\eta>0$, there is a $\delta>0$ such that SSE($\eta,\delta$) is NP-hard.
\label{sse}
\end{conjecture}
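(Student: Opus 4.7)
The plan is to prove NP-hardness of distinguishing SSE$(\eta,\delta)$ instances for every fixed $\eta>0$ via a PCP-style reduction from a canonical hard constraint satisfaction problem, most naturally a gap version of Label Cover with arbitrarily small soundness. The construction would produce a regular graph $G$ on a vertex set indexed by a ``long-code'' (i.e.\ Boolean hypercube) encoding of the Label Cover variables, with edges defined by an appropriate noise operator at rate $\rho=\rho(\eta)$. The intended completeness case embeds, for each satisfying labeling of the Label Cover instance, a ``dictator fiber'' $S$ of density $\delta$ whose internal edge mass under the noise operator is at least $1-\eta$, so that $\Phi_G(S)\le\eta$ as required in the \textbf{Yes} case.

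For the \textbf{No} case the goal would be the contrapositive decoding statement: if some $S\subseteq V$ of density $\delta$ satisfies $\Phi_G(S)\le 1-\eta$, then from the indicator $\mathbf{1}_S$ one can read off a labeling of the Label Cover instance satisfying a non-negligible fraction of its constraints, contradicting the soundness of the starting instance. This step is where the analytic machinery must enter: one would invoke hypercontractivity (Bonami--Beckner) together with an ``influence versus noise-stability'' trade-off in the spirit of Majority Is Stablest, to extract from any Boolean function of density $\delta$ with high noise stability an influential coordinate, and then use that coordinate as the decoded label. The essential requirement is a version of this decoding that is uniform over \emph{all} small sets, not just those with product or dictator structure.

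The main obstacle is precisely this soundness decoding, and it is the reason I do not expect the naive execution of this plan to go through. Raghavendra and Steurer~\cite{RS10} proved that SSE already implies the Unique Games Conjecture, so one cannot start the reduction from Unique Games itself; one must start from a strictly stronger primitive, and the invariance-style decoding needed to handle arbitrary small non-expanding sets (as opposed to sets arising from a single long-code dictator) has so far resisted all known Fourier-analytic approaches. A genuinely new invariance principle tailored to small-set expansion, or a new algebraic gadget replacing the long code, appears to be necessary. Since SSE has been a central open problem since its introduction, this proposal should be read as a roadmap identifying the missing analytic ingredient, rather than a plan executable by routine calculation.
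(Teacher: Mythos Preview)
The statement you are attempting to prove is a \emph{conjecture}, not a theorem: the paper does not prove it and never claims to. The Small Set Expansion Hypothesis is cited from Raghavendra--Steurer~\cite{RS10} and used throughout the paper purely as an \emph{assumption} under which the authors derive conditional hardness results (Theorem~\ref{thm:2}, Corollary~\ref{corr:1}, Theorem~\ref{thm:sse_hard}). There is therefore no ``paper's own proof'' to compare your proposal against.

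You have, to your credit, correctly diagnosed the situation: the soundness decoding step you describe is exactly the obstruction that makes SSE an open problem, and your observation that one cannot bootstrap from Unique Games (since SSE implies UGC) is on point. But the upshot is that no proof was expected here. The appropriate response to this statement is simply to recognize it as a hypothesis the paper assumes, not to attempt a proof.
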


We say that a problem is \textit{SSE-hard} if it is at least as hard to solve as the SSE problem. The form of conjecture most relevant to our proof is the following ``stronger'' form of the SSE Hypothesis. \cite{RST12} showed that the SSE-problem can be reduced to a quantitatively stronger form of itself. In order to state this version, we first need to define the \emph{Gaussian noise stability}.

\begin{definition}(Gaussian Noise Stability)
\label{def:gauss-noise}
Let $\rho\in[-1,1]$. Define $\Gamma_\rho:[0,1]\mapsto [0,1]$ by
\begin{equation*}
\Gamma_\rho(\mu) = Pr[X\leq \Phi^{-1}(\mu) \wedge Y\leq \Phi^{-1}(\mu)]
\end{equation*}
where $X$ and $Y$ are jointly normal random variables with mean $0$ and covariance matrix $\begin{pmatrix}1 & \rho \\ \rho & 1\end{pmatrix}.$
\end{definition}
The only fact that we will use for stating the stronger form of SSEH is the asymptotic behavior of $\Gamma_\rho(\mu)$ when $\rho$ is close to $1$ and $\mu$ bounded away from 0.

\begin{fact}
\label{fact:gauss-noise}
There is a constant $c>0$ such that for all sufficiently small $\epsilon$ and all $\mu\in [1/10,1/2],$\footnote{Note that the lower bound on $\mu$ can be taken arbitrarily close to $0$. So the statement holds with $\mu \in [\epsilon',1/2]$ for any constant $\epsilon'>0$.}
\begin{equation*}
\Gamma_{1-\epsilon}(\mu) \leq \mu(1-c\sqrt{\epsilon}).
\end{equation*}
\end{fact}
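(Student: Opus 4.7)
The plan is to rewrite $\mu - \Gamma_{1-\epsilon}(\mu)$ as the geometric probability $\Pr[X \leq t,\, Y > t]$, where $t = \Phi^{-1}(\mu)$, and to lower-bound this by a quantity of order $\sqrt{\epsilon}$, uniformly for $\mu \in [1/10,1/2]$. Since $\mu \leq 1/2$, any such bound $c'\sqrt{\epsilon}$ automatically implies $\mu - \Gamma_{1-\epsilon}(\mu) \geq c\mu\sqrt{\epsilon}$ with $c = 2c'$, which is exactly what the fact claims.

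First I would use the standard Gaussian decomposition $Y = \rho X + \sqrt{1-\rho^2}\,Z$, where $Z \sim N(0,1)$ is independent of $X$, to rewrite
$$
\mu - \Gamma_{1-\epsilon}(\mu) \;=\; \Pr\!\left[X \leq t,\; \rho X + \sqrt{1-\rho^2}\,Z > t\right].
$$
Next I would restrict to the event $X \in [t - \sqrt{\epsilon},\, t]$. Writing $X = t - x$ with $x \in [0,\sqrt{\epsilon}]$ and $\rho = 1 - \epsilon$, the condition $\rho X + \sqrt{1-\rho^2}\,Z > t$ simplifies to $Z > (\epsilon t + (1-\epsilon)x)/\sqrt{2\epsilon - \epsilon^2}$. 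For $\mu \in [1/10,1/2]$ the value $t = \Phi^{-1}(\mu)$ lies in a fixed bounded interval $I$, so $|t| \leq T_0$ for some constant $T_0$, and this threshold is bounded above by $(\epsilon T_0 + \sqrt{\epsilon})/\sqrt{2\epsilon - \epsilon^2} \leq C$ uniformly in $\mu$, once $\epsilon$ is small enough (since the denominator is of order $\sqrt{\epsilon}$).

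By independence of $X$ and $Z$,
$$
\mu - \Gamma_{1-\epsilon}(\mu) \;\geq\; \Pr[X \in [t - \sqrt{\epsilon},\, t]]\cdot \Pr[Z > C] \;\geq\; c_\phi \sqrt{\epsilon}\cdot \Pr[Z > C],
$$
where $c_\phi > 0$ is any lower bound on the standard normal density on $I$; both factors are positive constants independent of $\mu$, giving the desired $c'\sqrt{\epsilon}$ bound.

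The main obstacle will be pinning down the constants in the upper bound on the $Z$-threshold, since one must simultaneously control the $\epsilon t$ and $(1-\epsilon)x$ contributions while keeping the denominator $\sqrt{1-\rho^2}$ of the right order $\sqrt{\epsilon}$. The conceptual point is that the Gaussian width $\sqrt{1-\rho^2} \sim \sqrt{2\epsilon}$ is precisely the length scale on which $X$ and $Y$ disagree, so shifting $X$ by $\sqrt{\epsilon}$ produces a constant-probability event for $Z$, which is exactly why $\sqrt{\epsilon}$ rather than $\epsilon$ appears in the bound. A cleaner alternative route uses Plackett's identity $\frac{d}{d\rho}\Gamma_\rho(\mu) = (2\pi\sqrt{1-\rho^2})^{-1}\exp(-t^2/(1+\rho))$ and integrates from $1-\epsilon$ to $1$; the same $\sqrt{\epsilon}$ scaling then emerges from the integrable singularity $\int_0^\epsilon ds/\sqrt{s} = 2\sqrt{\epsilon}$.
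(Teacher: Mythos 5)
Your proposal is correct. Note that the paper itself does not prove this statement: it is imported as a known estimate on Gaussian noise stability (following the treewidth-hardness setup of Austrin--Pitassi--Wu), so there is no in-paper argument to compare against; your contribution is a self-contained elementary derivation. The key identity $\mu-\Gamma_{1-\epsilon}(\mu)=\Pr[X\le t,\,Y>t]$ with $t=\Phi^{-1}(\mu)$, the decomposition $Y=\rho X+\sqrt{1-\rho^2}\,Z$, and the restriction to the slab $X\in[t-\sqrt{\epsilon},t]$ all check out: for $\mu\in[1/10,1/2]$ one has $t\in[\Phi^{-1}(1/10),0]$, the $Z$-threshold $(\epsilon t+(1-\epsilon)x)/\sqrt{2\epsilon-\epsilon^2}$ is bounded by an absolute constant $C$ for $x\in[0,\sqrt{\epsilon}]$ and small $\epsilon$ (indeed $\sqrt{2\epsilon-\epsilon^2}\ge\sqrt{\epsilon}$), and independence of $X$ and $Z$ gives the product lower bound; combining with $\mu\le 1/2$ converts the additive $c'\sqrt{\epsilon}$ gap into the multiplicative form $\mu(1-2c'\sqrt{\epsilon})$, and the same argument gives the footnote's extension to $\mu\in[\epsilon',1/2]$ with constants depending on $\epsilon'$. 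One cosmetic point to fix when writing it up: the slab $[t-\sqrt{\epsilon},t]$ sticks out slightly below the interval $I$ of possible values of $t$, so the density lower bound $c_\phi$ should be taken over a slightly enlarged interval, say $[\Phi^{-1}(1/10)-1,0]$. Your alternative route via Plackett's identity, integrating $\frac{d}{d\rho}\Gamma_\rho(\mu)=\bigl(2\pi\sqrt{1-\rho^2}\bigr)^{-1}\exp\bigl(-t^2/(1+\rho)\bigr)$ over $\rho\in[1-\epsilon,1]$ and using the integrable singularity to extract the $\sqrt{\epsilon}$ scaling, is also valid and arguably cleaner, since it exhibits the exact source of the square root.
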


\begin{conjecture}[SSE Hypothesis, Equivalent Formulation \cite{RST12}]
For every integer $q>0$ and $\epsilon,\gamma>0$, it is NP-hard to distinguish between the following two cases for a given regular graph $G=(V,E)$:
\begin{itemize}
\item \textbf{Yes} There is a partition of $V$ into $q$ equi-sized sets $S_1,\cdots,S_q$ such that $\Phi_G(S_i)\leq 2\epsilon$ for every $1\leq i\leq q.$
\item \textbf{No} For every $S\subseteq V,$ letting $\mu=|S|/|V|$, it holds that $\Phi_G(S)\geq 1-(\Gamma_{1-\epsilon/2}(\mu)+\gamma)/\mu,$ 
\end{itemize}
where the $\Gamma_{1-\epsilon/2}(\mu)$ is the Gaussian noise stability.
\label{SSE}
\end{conjecture}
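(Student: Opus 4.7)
The plan is to reduce the basic Small Set Expansion problem SSE$(\eta,\delta)$ of Conjecture~\ref{sse} to the equivalent formulation stated here, following the approach of~\cite{RST12}. The reduction proceeds via a tensor power construction, which amplifies a single small non-expanding set into a full partition of equi-sized non-expanding sets, while simultaneously sharpening the lower bound on expansion in the No case into a quantitative Gaussian noise stability statement.

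First, given a regular graph $G=(V,E)$ from an SSE$(\eta',\delta')$ instance (with $\eta',\delta'$ to be chosen in terms of $q,\epsilon,\gamma$), I would construct the tensor power $G^{\otimes t}$ whose vertex set is $V^t$ and whose edges are tensor products of edges of $G$. In the Yes case, a set $S\subseteq V$ of density $\delta'$ with $\Phi_G(S)\leq \eta'$ lifts to the product set $S^{t}$, which has density $(\delta')^t$ and expansion at most $t\eta'$ in $G^{\otimes t}$. The partition into $q$ equi-sized pieces is then obtained by choosing $t,\delta'$ so that $(\delta')^t = 1/q$ and taking translates of $S^{t}$ under a natural group action on $V^t$ (e.g.\ coordinate-wise shifts when $V$ carries a group structure, or an equivalent equipartitioning scheme when it does not), with the expansion of each piece bounded by $t\eta'$, which can be made at most $2\epsilon$.

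In the No case, for any $S\subseteq V^t$ of measure $\mu$, I need to show $\Phi_{G^{\otimes t}}(S)\geq 1-(\Gamma_{1-\epsilon/2}(\mu)+\gamma)/\mu$. The approach is to expand the indicator function of $S$ in the Fourier basis associated with the Markov operator of $G^{\otimes t}$. The contribution from high Fourier weight is suppressed by the SSE No hypothesis on $G$, which controls how much the operator can preserve the mass of spread functions. The contribution from low Fourier weight is then handled by the Mossel--O'Donnell--Oleszkiewicz invariance principle, which replaces the discrete Fourier inner product by the analogous expectation over jointly Gaussian random variables with correlation $1-\epsilon/2$, yielding precisely the noise stability quantity $\Gamma_{1-\epsilon/2}(\mu)$ of Definition~\ref{def:gauss-noise} up to an additive slack $\gamma$.

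The main obstacle will be the invariance-principle step in the No case: one must choose the tensor power $t$ large enough (and rescale the noise parameter appropriately) so that the relevant Markov operator has spectrum approximating an Ornstein--Uhlenbeck operator on Gaussian space, and then control the error terms arising from truncating Fourier weight as well as from coordinates where the indicator function has large influence. This is the technically heaviest part, while the Yes-side partition and the parameter setting follow relatively cleanly once the invariance principle is in place. Combining these pieces with the NP-hardness of the basic SSE problem from Conjecture~\ref{sse} then yields the claimed NP-hardness of the stronger formulation.
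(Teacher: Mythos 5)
You should first note that the paper does not prove this statement at all: it is stated as a \emph{conjecture} (an equivalent formulation of Conjecture~\ref{sse}), with the equivalence attributed entirely to \cite{RST12} and then used as a black box in the hardness proofs. So there is no in-paper argument to compare against; what you are proposing is a reproof of the Raghavendra--Steurer--Tulsiani equivalence itself, which is a substantial external theorem, not something the paper attempts.

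As a sketch of that equivalence, your reduction has a genuine gap in the completeness direction. A \textbf{Yes} instance of SSE$(\eta',\delta')$ guarantees only \emph{one} non-expanding set $S$ of measure $\delta'$; the rest of $G$ may be a strong expander (e.g.\ a small sparsely attached cluster sitting inside an expander). In $G^{\otimes t}$ the product set $S^{t}$ is indeed non-expanding, but there is no reason the remaining $1-(\delta')^{t}$ fraction of $V^{t}$ decomposes into equi-sized sets of measure $(\delta')^{t}$ with expansion at most $2\epsilon$: any product-like set using coordinates from $V\setminus S$ expands strongly, so the required partition simply need not exist. ``Translates under a group action'' is unavailable for a general graph, and the ``equivalent equipartitioning scheme'' you invoke is precisely the nontrivial content of the \cite{RST12} reduction, which does not proceed by plain tensoring. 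The soundness side has a related problem: a bare tensor power carries no noise parameter, so nothing in the construction produces the correlation $1-\epsilon/2$ appearing in $\Gamma_{1-\epsilon/2}(\mu)$, and the invariance-principle step requires a low-influence condition that you acknowledge but do not arrange; handling influential coordinates (via built-in noise/resampling and a decoding argument) is where the actual work in \cite{RST12} lies. As written, the proposal therefore does not establish the stated equivalence, and for this paper's purposes the correct move is simply to cite \cite{RST12} rather than to prove it.
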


We present two remarks about the Conjecture \ref{SSE} from \cite{APW12}, which are relevant to our proof of Theorem~\ref{thm:2}.

\begin{remark}\cite{APW12}
The \textbf{Yes} instance of Conjecture \ref{SSE} implies that the number of edges leaving each $S_i$ is at most $4\epsilon|E|/q,$ so the total number of edges not contained in one of the $S_i$ is at most $2\epsilon|E|.$ 
\label{rmk1}
\end{remark}

\begin{remark}\cite{APW12}
The \textbf{No} instance of Conjecture \ref{SSE} implies that for $\epsilon$ sufficiently small, there exists some constant $c'$ such that $\Phi_G(S)\geq c'\sqrt{\epsilon},$ provided that $\mu \in [1/10,1/2]$ and setting $\gamma \leq \sqrt{\epsilon}$. In particular, $|E(S,V\backslash S)|\geq \Omega(\sqrt{\epsilon}|E|),$ for any $|V|/10 \leq |S| \leq 9|V|/10.$  \footnote{Recall that Fact \ref{fact:gauss-noise} is true for $\mu \in [\epsilon',1/2]$ for any constant $\epsilon'>0$. Therefore, Remark \ref{rmk2} can be strengthened and states, for any $\epsilon'|V|\leq|S|\leq (1-\epsilon')|V|,$ $|E(S,V\backslash S)| \geq \Omega (\sqrt{\epsilon}|E|)$. This will be a useful fact for proving hardness of approximation of $m(G,g)$ for finding many truthful nodes in Section \ref{sec:find-many-truthful}.} 
\label{rmk2}
\end{remark}

Remark \ref{rmk1} follows from the definition of normalized edge expansion and the fact that sum of degree is two times number of edges. Remark \ref{rmk2} follows from Fact \ref{fact:gauss-noise}. The strong form of SSE Hypothesis \ref{SSE}, Remark \ref{rmk1}, and Remark \ref{rmk2} will be particularly helpful for proving our SSE-hardness of approximation result (Theorem~\ref{thm:2}).

\subsection{Preliminaries for Corruption Detection on Networks}

We model networks as directed or undirected graphs, where each vertex in the network can be one of two types: truthful or corrupted. At times, we will informally call truthful vertices ``good'' and corrupt vertices ``bad.'' We say that the corrupt party has \textit{budget} $b$ if it can afford to corrupt at most $b$ nodes of the graph. Given a vertex set $V$, and a budget $b$, the corrupt entity will choose to control a subset of nodes $B \subseteq V$ under the constraint $|B|\leq b$. The rest of the graph remains as truthful vertices, i.e., $T = V\backslash B \subseteq V$. We assume that there are more truthful than corrupt nodes ($b<|V|/2$). It is easy to see that in the case where $|B| \geq |T|$, the corrupt nodes can prevent the identification of even one truthful node, by simulating truthful nodes (see e.g.~\cite{AMP15}).

Each node audits and reports its (outgoing) neighbors’ identities. That is, each vertex $u \in V$ will report the type of each $v \in \mathcal{N}(u)$, which is a vector in $\{0,1\}^{|\mathcal{N}(u)|}$. Truthful nodes always report the truth, i.e., it reports its neighbor $v\in T$ if $v$ is truthful, $v\in B$ if $v$ is corrupt. The corrupt nodes report their neighbors' identities adversarially. In summary, a strategy of the bad agents is composed of a strategy to take over at most $b$ nodes on the graph, and reports on the nodes that neighbor them. 

\begin{definition}[\textbf{Strategy for a corrupt party}]
A strategy for the corrupt party is a function that maps a graph $G$ and budget $b$ to a subset of nodes $B$ with size $|B| \leq b$, and a set of reports that each node $v\in B$ gives about its neighboring nodes, $\mathcal{N}(v).$
\end{definition}

\begin{definition}[\textbf{Computationally bounded corrupt party}]
We say that the corrupt party is computationally bounded if its strategy can only be a polynomial-time computable function. 
\end{definition}

The task for the central agency is to find a good node on this corrupted network, based on the reports. It is clear that the more budget the corrupt party has, the harder the task of finding one truthful node becomes. It was observed in~\cite{AMP15} that, for any graph, it is not possible to find one good node if $b \geq |V|/2$. If $b=0$, it is clear that the entire set $V$ is truthful. Therefore, given an arbitrary graph $G$, there exists a critical number $m(G)$, such that if the bad party has budget lower than $m(G)$, it is always possible to find a good node; if the bad party has budget greater than or equal to $m(G),$ it may not be possible to find a good node. In light of this, we define the critical number of bad nodes on a graph $G$. First, we formally define what we mean when we say it is impossible to find a truthful node on a graph $G$.


\begin{definition}[\textbf{Impossibility of finding one truthful node}]
\label{def:impossible-good}
Given a graph $G=(V,E)$, the bad party's budget $b$ and reports, we say that it is \emph{impossible to identify one truthful node} if for every $v\in V$ there is a configuration of the identities of the nodes where $v$ is bad, and the configuration is consistent with the given reports, and consists of fewer than or equal to $b$ bad nodes.
\end{definition}

\begin{definition}[\textbf{Critical number of bad nodes on a graph $G$, $m(G)$}]
\label{def:critical-num}
Given an arbitrary graph $G=(V,E)$, we define \emph{$m(G)$} as the minimum number $b$ such that there is a way to distribute $b$ corrupt nodes and set their corresponding reports such that it is impossible to find \textit{one} truthful node on the graph $G$, given $G$, the reports and that the bad party's budget is at most $b$.
\end{definition}

For example, for a star graph $G$ with $|V|\geq 5$, the critical number of bad nodes is $m(G) =2$. If there is at most $1$ corrupt node on $G$, the central agency can always find a good node, thus $m(G)\neq 1$. If there are at most $2$ bad nodes on $G$, then the bad party can control the center node and one of the leaves. It is impossible for central agency to find one good node.

Given a graph $G$, by definition there exists some set of $m(G)$ nodes that can make it impossible to find a good node if they are corrupted. However, this does not mean that the corrupt party can necessarily find this set in polynomial time. Indeed, Theorem~\ref{thm:2} establishes that they cannot always find this set in polynomial time if we assume the SSE Hypothesis (Conjecture~\ref{SSE}) and that P $\neq$ NP.

\section{Proofs of Theorems~\ref{thm:1}, \ref{thm:2}, and \ref{thm:3}}
\label{sec:main-results}
In the following section, we state our main results by first presenting the close relation of our problem to the $k$-vertex separator problem. Then we use this characterization to prove Theorem~\ref{thm:1}. This characterization will additionally be useful for the proofs of Theorems~\ref{thm:2} and~\ref{thm:3}, which we will give in Section~\ref{sec:hardness-proof} and Section~\ref{sec:approx-algorithm}.

\subsection{2-Approximation by Vertex Separation}

\begin{lemma}[2-Approximation by Vertex Separation] The critical number of corrupt nodes for graph $G$, $m(G)$, 
can be bounded by the minimal sum of $k$-vertex separator and $k$, $\min_k (S_G(k)+k)$, up to a factor of 2. i.e., 
$$\frac{1}{2}\min_{k}{(S_G(k)+k)} \leq m(G)\leq \min_k {(S_G(k)+k)}$$
\label{lemma_2_approx}
\end{lemma}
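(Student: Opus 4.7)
I will prove the two inequalities separately. The key observation underlying both is a ``component containment'' lemma: if $B$ is the real bad set and $B_v$ is a consistent cover story in which vertex $v$ is labeled bad, then the entire connected component of $G \setminus B$ containing $v$ must lie inside $B_v$. This is because real good nodes report truthfully, and they cannot certify a neighbor as ``good'' in reality while that neighbor is labeled ``bad'' in the cover story.

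\textbf{Upper bound: $m(G) \leq \min_k(S_G(k)+k)$.} Fix $k$, let $U$ be a minimum $k$-vertex separator of $G$, and let $C_1, \dots, C_t$ be the connected components of $G \setminus U$, each of size at most $k$. The corrupt party corrupts $B := U \cup C_1$, which has size at most $S_G(k)+k$. The chosen reports are: nodes in $C_1$ report as if they were good members of an isolated component (i.e., they call their $C_1$-neighbors good and their $U$-neighbors bad), while nodes in $U$ report their neighbors truthfully. For each vertex $v$ I exhibit a cover story of size at most $S_G(k)+k$ with $v$ bad and consistent with the reports: if $v \in U \cup C_1$, use the real labeling $B$; if $v \in C_j$ for $j \geq 2$, use $B_j := U \cup C_j$. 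One then checks that every good-side node under each cover story reports exactly what that cover story predicts, using that $U$ separates the $C_i$'s from one another.

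\textbf{Lower bound: $m(G) \geq \tfrac{1}{2}\min_k(S_G(k)+k)$.} Suppose the corrupt party succeeds with the minimal real bad set $B$ of size $m(G)$. For each real-good vertex $v \in V \setminus B$, let $C_v$ be its component in $G \setminus B$, and let $B_v$ be the guaranteed consistent cover story with $v \in B_v$ and $|B_v| \leq m(G)$. I claim $C_v \subseteq B_v$, proved by induction along a BFS tree inside $C_v$ rooted at $v$. The base case is immediate. For the inductive step, suppose $u \in C_v \cap B_v$ and $w \in C_v$ is a neighbor of $u$. Then $w$ is real good (since $w \in C_v \subseteq V \setminus B$), so $w$ reports $u$ truthfully, namely as good (since $u \in C_v \subseteq V \setminus B$). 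If $w \notin B_v$, the cover story would require $w$'s truthful report to match $B_v$, but $B_v$ labels $u$ as bad---contradiction. Hence $w \in B_v$. Since $C_v$ is connected, all of $C_v$ lies in $B_v$, so $|C_v| \leq |B_v| \leq m(G)$. This holds for every component of $G \setminus B$, so $B$ is an $m(G)$-vertex separator of size $m(G)$. Setting $k = m(G)$ gives $S_G(k)+k \leq 2m(G)$, so $\min_k(S_G(k)+k) \leq 2m(G)$.

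\textbf{Expected obstacle.} The only nontrivial piece is verifying that the reports in the upper-bound construction simultaneously make all $t$ cover stories $B, B_2, \dots, B_t$ consistent; the subtlety is that $C_1$'s reports are needed for the cover stories $B_j$ ($j \geq 2$) where $C_1$ is labeled good, even though $C_1$ is really bad. Because $U$ is a vertex separator, each $C_i$-node only sees neighbors in $C_i$ and in $U$, and the chosen reports (``my in-component neighbors are good; my $U$-neighbors are bad'') are simultaneously valid for every cover story, so no conflict arises. The lower-bound argument is mostly a careful unpacking of the definition of consistency combined with the inductive component-containment claim, and I do not anticipate difficulty there.
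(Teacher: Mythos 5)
Your proof is correct, and while your upper bound is essentially the paper's construction (corrupt a minimum $k$-vertex separator $U$ plus one small component, with exactly this reporting scheme; you merely verify the consistency of all the cover stories $U \cup C_j$ more explicitly), your lower bound takes a genuinely different route. The paper derives the lower bound from the failure of its pair-removal procedure (Algorithm 1): if the first phase removes $2i$ nodes, at most $b-i$ bad nodes remain, and since the procedure must fail when $b = m(G)$, the largest surviving all-good-reports component has size at most $b-i$; hence the $2i$ removed nodes form a $(b-i)$-vertex separator and $m(G) = i + (b-i) \geq \frac{1}{2}\min_k(S_G(k)+k)$. You instead argue directly from the definition of impossibility via your cover-story containment claim: any consistent configuration that labels a real-good vertex $v$ as bad must contain the whole component of $v$ in $G \setminus B$, so every such component has size at most $m(G)$ and the true bad set $B$ is itself an $m(G)$-vertex separator of size at most $m(G)$, giving $\min_k(S_G(k)+k) \leq S_G(m(G)) + m(G) \leq 2m(G)$. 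Your argument is more elementary and self-contained (no algorithm needed), and it exhibits a different witness separator ($B$ with $k = m(G)$, versus the paper's removed pairs with $k = b-i$). What the paper's route buys is that the very same analysis powers Theorem 1: it shows the linear-time algorithm actually finds a truthful node whenever $b \leq m(G)/2$, a consequence your definitional argument does not directly provide.
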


\begin{proof}[Proof of Lemma \ref{lemma_2_approx}]
 The direction $m(G) \leq \min_k S_G(k)+k$ follows simply. Let $k^* = \arg\min_k (S_G(k)+k)$. If the corrupt party is given $S_G(k^*)+k^*$ nodes to corrupt on the graph, it can first assign $S_G(k^*)$ nodes to the separator, thus the remaining nodes are partitioned into components of size at most $k^*$. Then it arbitrarily assigns one of the components to be all bad nodes. The bad nodes in the connected components report the nodes in the same component as good, and report any node in the separator as bad. The nodes in the separator can effectively report however they want (e.g. report all neighboring nodes as bad). It is impossible to identify even one single good node, because all connected components of size $k$ can potentially be bad, and all vertices in the separator are bad.

The direction $1/2 \min_k (S_G(k)+k)\leq m(G)$ can be proved as follows. When there are $b=m(G)$ corrupt nodes distributed optimally in $G$, it is impossible to find a single good node by definition, and therefore, in particular, the following algorithm (Algorithm~\ref{alg1}) cannot always find a good node:


\begin{algorithm}[H]
Input: Undirected graph $G$
\begin{itemize}
\item If the reports on edge $(u,v)$ does not equal to $(u\in T, v \in T)$, remove both $u,v$ and any incident edges. Remove a pair of nodes in each round, until there are no bad reports left.
\item Call the remaining graph $H$. Declare the largest component of $H$ as good.
\end{itemize}
\caption{Finding one truthful vertex on undirected graph $G$}
\label{alg1}
\end{algorithm}

Run Algorithm \ref{alg1} on $G$, and suppose the first step terminates in $i$ rounds, then: 
\begin{itemize}
\item No remaining node reports neighbors as corrupt
\item $|V|-2i$ nodes remain in graph
\item $\leq b-i$ bad nodes remain in the graph, because each time we remove an edge with bad report, and one of the end points must be a corrupt vertex.
\end{itemize} 

Note that if two nodes report each other as good, they must be the same type (either both truthful, or both corrupt.) Since graph $H$ only contains good reports, nodes within a connected component of $H$ have the same types. If there exists a component of size larger than $b-i$, it exceeds bad party's budget, and must be all good. Therefore, Algorithm~\ref{alg1} would successfully find a good node.

Since Algorithm \ref{alg1} cannot find a good node, the bad party must have the budget to corrupt the largest component of $H$, which means it has size at most $b-i$. Hence, $S_G(b-i)\leq 2i.$ Plugging in $b=m(G),$ we get that 



$$m(G) = \frac{2i}{2}+ b-i\geq \min_k (S_G(k) / 2 + k) \geq \frac{1}{2}\min_k (S_G(k)+k),$$

where the first inequality comes from $2i\geq S_G(b-i).$
\end{proof}


Furthermore, the upperbound in Lemma \ref{lemma_2_approx} additionally tells us that if corrupt party's budget $b\leq m(G)/2$, the set output by Algorithm \ref{alg1} is guaranteed to be good. 

\begin{theorem}
\label{thm:1}
Fix a graph $G$ and suppose that the corrupt party has a budget $b \leq m(G) / 2$. Then the central agency can identify a truthful node, regardless of the strategy of the corrupt party, and without knowledge of either $m(G)$ or $b$. Furthermore, the central agency's algorithm runs in linear time (in the number of edges in the graph $G$).
\end{theorem}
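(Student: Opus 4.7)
The plan is to show that Algorithm~\ref{alg1} already does the job, by piggybacking on the quantitative bookkeeping used in the lower bound direction of Lemma~\ref{lemma_2_approx}. I will run Algorithm~\ref{alg1}, track a few simple invariants on the residual graph $H$, and then show that under the budget condition $b \leq m(G)/2$ the largest connected component of $H$ must consist entirely of truthful vertices.

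First I would record the invariants after the pair-removal phase terminates in some number $i$ of rounds. At that point: (a) every edge of the remaining graph $H$ carries a ``good'' report on both endpoints; (b) $|V|-2i$ vertices remain; and (c) at most $b - i$ corrupt vertices remain, because every removed pair contains at least one corrupt vertex (a truthful vertex only reports the truth, so a bad report on $(u,v)$ implies at least one of $u,v$ is corrupt). Invariant (a) also implies that within any connected component of $H$, all vertices share the same type: two vertices reporting each other as good must both be truthful or both be corrupt.

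Next I would do the case analysis on the largest component $C^\star$ of $H$. If $|C^\star| > b - i$, then $C^\star$ cannot be entirely corrupt (it would exceed the remaining corruption budget from (c)), so by the uniform-type property it is entirely truthful, and the algorithm succeeds. Otherwise every component of $H$ has size at most $b - i$, which means the $2i$ removed vertices form a $(b-i)$-vertex separator of $G$; hence $S_G(b-i) \leq 2i$. Plugging into the upper bound of Lemma~\ref{lemma_2_approx},
\begin{equation*}
m(G) \;\leq\; \min_k \bigl( S_G(k) + k \bigr) \;\leq\; S_G(b-i) + (b-i) \;\leq\; 2i + (b-i) \;=\; b + i.
\end{equation*}
Combined with the hypothesis $b \leq m(G)/2$, i.e.\ $m(G) \geq 2b$, this forces $i \geq b$, so invariant (c) gives that zero corrupt vertices remain in $H$. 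Then \emph{every} component of $H$ is truthful, and the algorithm again succeeds. Crucially, the argument never required the central agency to know $m(G)$ or $b$: the algorithm is oblivious to both.

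Finally, for the linear-time claim, I would note that the removal phase can be implemented by a single pass over the edges maintaining, for each vertex, a flag of whether it has already been removed, and a queue of ``pending'' edges carrying a bad report; processing each such edge removes its two endpoints and marks their incident edges so they are never revisited. This is $O(|E|)$ total work. Finding the largest connected component of the resulting graph $H$ is a standard BFS/DFS in $O(|V| + |E|)$. I do not anticipate a real obstacle here; the main subtlety is simply to be careful that the ``$\leq b - i$ corrupt nodes remain'' bound and the ``size $\leq b-i$'' definition of a separator line up correctly in the inequality chain, since the budget drops by exactly one per round while the removed-set grows by two.
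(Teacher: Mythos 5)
Your overall route is the same as the paper's: run Algorithm~\ref{alg1}, use the invariants that each removal round kills at least one corrupt vertex (so at most $b-i$ corrupt vertices survive), that components of $H$ are monochromatic, and then compare component sizes against the surviving budget via the upper bound of Lemma~\ref{lemma_2_approx}. Your first case is fine. The problem is the second case. There, your own arithmetic gives $i\ge b$, and since also $i\le b$ (each round removes a corrupt vertex), in fact $b-i=0$; but then ``every component of $H$ has size at most $b-i$'' means $H$ has no components at all, i.e.\ $H$ is empty. So the concluding sentence ``every component of $H$ is truthful, and the algorithm again succeeds'' is vacuous: the algorithm has no component to declare, and you have not shown success in this case. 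What the arithmetic really yields is a contradiction --- but only once you know $H$ is nonempty, which you never establish.

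The missing line is exactly the one the paper supplies: at most $2b\le m(G)\le\lceil |V|/2\rceil<|V|$ vertices are removed (equivalently, invoke the standing assumption $b<|V|/2$: if $H$ were empty then $|V|=2i\le 2b<|V|$), hence $H$ is nonempty. With that observation your second case is impossible, so the first case always holds and the proof closes; alternatively one can argue directly as the paper does, letting $k^*\ge 1$ be the largest component size, noting $S_G(k^*)\le 2i$, and deducing $b-i\le\bigl(2i+k^*\bigr)/2-i=k^*/2<k^*$. Your treatment of the linear-time claim is fine and matches the paper's.
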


\begin{proof}[Proof of Theorem \ref{thm:1}]
Suppose the corrupt party has budget $b\leq m(G)/2$. Run Algorithm \ref{alg1}. We remove $2i$ nodes in the first step, and separate the remaining graph $H$ into connected components. Notice each time we remove an edge with bad report, at least one of the end point is a corrupt vertex. So we have removed at most $2b \leq m(G)\leq\lceil|V|/2\rceil$ nodes. Therefore, the graph $H$ is nonempty, and the nodes in any connected component of $H$ have the same identity. Let $k^*\geq 1$ be the size of the maximum connected component of $H$. We can conclude that $S_G(k^*)\leq 2i$, since $2i$ is a possible size of $k^*$-vertex separator of $G$. 

Notice there are at most $b-i\leq m(G)/2-i$ bad nodes in $H$ by the same fact that at least one bad node is removed each round. By the upper bound in Lemma \ref{lemma_2_approx},  

$$b-i\leq m(G)/2-i\leq \min_k (S_G(k)+k)/2 -i \leq (2i+k^*)/2-i\leq \frac{k^*}{2}.$$

Since $k^*\geq 1,$ the connected component of size $k^*$ exceeds the bad party's remaining budget $k^*/2$, and must be all good. 

Algorithm \ref{alg1} is linear time because it loops over all edges and removes any ``bad'' edge that does not have reports $(T,T)$ (takes $\leq |E|$ time when we use a list with ``bad'' edges at the front), and counts the size of the remaining components ($\leq |V|$ time), and thus is linear in $|E|$. 
\end{proof}
\begin{remark}
Both bounds in Lemma \ref{lemma_2_approx} are tight. For the lower bound, consider a complete graph with an even number of nodes. For the upper bound, consider a complete bipartite graph with one side smaller than the other.
\label{tight}
\end{remark}

To elaborate on Remark \ref{tight}, for the lower bound, in a complete graph with $n$ nodes, the critical number of bad nodes is $n/2$, and $\min_k S_G(k)+k = n$. 

For the upper bound, consider a complete bipartite graph $G=(V,E)$. The vertex set is partitioned into two sets $V=S_1\cup S_2$ where the induced subgraphs on $S_1$ and $S_2$ consist of isolated vertices, and every vertex $u\in S_1$ is connected with every vertex $v\in S_2$. The smallest sum of $k$-vertex separator with $k$ is obtained with $k = 1$, i.e., $\min_k S_G(k)+k = \min\{|S_1|,|S_2|\}+ 1.$ We argue that this is also the minimal number of bad nodes needed to corrupt the graph. Without loss of generality , let $|S_1|<|S_2|.$ If the bad party controls all of $S_1$ plus one node in $S_2$, it can prevent the identification of a good node. On the other hand, if the bad party controls $b<|S_1|+1$ nodes, then we can always identify a good node. Specifically, we are in one of the following cases: 
\begin{enumerate}
\item The bad party does not control all of $S_1$. Then there will be a connected component of size $n-b > b$ that report each other as good, because the bad nodes cannot control all of $S_2$, and any induced subgraph of a complete bipartite graph with nodes on both sides is connected.
\item The bad party controls all of $S_1$. In this case, the largest connected component of nodes that all report each other as good is only 1. However, in this case, we conclude that the bad nodes must control all of $S_1$ and no other node (due to their budget). Hence, any node in $S_2$ is good.
\end{enumerate}

We end by discussing that the efficient algorithm given in this section does not address the regime when the budget of the bad party, $b$, falls in $m(G)/2 < b \leq m(G)$. Though by definition of $m(G)$, the central agency can find at least one truthful node as long as $b \leq m(G)$, by, for example, enumerating all possible assignments of good/bad nodes consistent with the report, and check the intersection of the assignment of good nodes. However, it is not clear that the central agency has a polynomial time algorithm for doing this. 
Of course, one can always run Algorithm \ref{alg1}, check whether the output set exceeds $b-i/2$, and concludes that the output set is truthful if that is the case. However, there is no guarantee that the output set will be larger than $b-i/2$ if $m(G)/2<b\leq m(G).$ We propose the following conjecture:

\begin{conjecture}
\label{conjecture:1}
Fix a graph $G$ and suppose that the corrupt party has a budget $b$ such that $m(G)/2 < b \leq m(G)$. The problem of finding one truthful node given the graph $G$, bad party's budget $b$ and the reports is NP-hard.
\end{conjecture}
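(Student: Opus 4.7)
The plan is to prove NP-hardness by reducing from a canonical NP-hard problem such as 3-SAT. Given a 3-SAT formula $\phi$ on $n$ variables and $m$ clauses, I would construct a graph $G_\phi$, a set of reports, and a budget $b_\phi$ satisfying $m(G_\phi)/2 < b_\phi \leq m(G_\phi)$, such that any efficient algorithm finding a provably-truthful vertex in this instance lets us decide satisfiability of $\phi$. Equivalently, one can view this as showing NP-hardness of the decision version ``is vertex $v$ truthful in every consistent configuration?'', from which the search version follows by standard arguments.

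First I would design variable gadgets exploiting asymmetric edge reports: an edge $(u,v)$ on which $u$ accuses $v$ and $v$ vouches for $u$ forces $u$ to be corrupt in every consistent configuration while leaving $v$ unpinned. Using pairs of such gadgets, a binary choice (which of the pair is corrupt) encodes the truth value of a variable. Clause gadgets then glue three literal vertices together so that the consistent corrupt-configurations of total size at most $b_\phi$ are in bijection with the satisfying assignments of $\phi$. An ``anchor'' subgraph with rigid reports ensures that at least one vertex is pinned as truthful whenever $\phi$ is satisfiable, so the ``find a truthful node'' problem is always well-posed on \textbf{Yes} instances.

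Second I would attach a padding structure (for instance, an expander-like core) whose sole role is to pin $m(G_\phi)$ to a precise target value via Lemma~\ref{lemma_2_approx}: one designs the padding so that $\min_k(S_{G_\phi}(k)+k)$ is controlled by the padding's vertex-separator profile. Setting $b_\phi$ just above $m(G_\phi)/2$ places the instance in the advertised regime, and one then argues that an algorithm determining whether a particular vertex is guaranteed truthful reveals whether a corresponding literal must hold in every satisfying assignment, giving a polynomial-time solver for 3-SAT.

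The main obstacle is to calibrate the vertex-separator profile of $G_\phi$ so that $b_\phi$ lands strictly in the narrow window $(m(G_\phi)/2,\, m(G_\phi)]$: below this window Theorem~\ref{thm:1} trivializes the problem, and above it no truthful node is guaranteed, so the reduction must pack the hardness precisely into this band. Achieving this requires a tight analysis of $S_{G_\phi}(k)$ for every $k$, so that no unintended small $k$-separator depresses $m(G_\phi)$ below the target. A secondary obstacle is guaranteeing a faithful bijection between consistent low-budget corrupt configurations and satisfying assignments; extra ``slack'' budget in the padding can create spurious configurations that decouple the reduction from the 3-SAT instance, and ruling these out likely demands careful combinatorial bookkeeping together with a slightly adversarial choice of anchor, possibly borrowed from the gadget used in the proof of Theorem~\ref{thm:2}.
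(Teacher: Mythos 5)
The statement you are trying to prove is posed in the paper as an open conjecture: the authors give no proof of it, and the surrounding discussion (that Algorithm~\ref{alg1} has no guarantee when $m(G)/2 < b \leq m(G)$, and that brute-force enumeration works but is not known to be efficient) is exactly why it is left open. So the relevant question is whether your outline constitutes a proof on its own, and it does not. What you have written is a strategy, not an argument: the variable, clause, and anchor gadgets are never constructed, the value of $m(G_\phi)$ is never computed, and the two issues you yourself flag as ``obstacles'' --- pinning $\min_k(S_{G_\phi}(k)+k)$ tightly enough that $b_\phi$ lands in the window $(m(G_\phi)/2, m(G_\phi)]$, and ruling out spurious consistent configurations so that low-budget configurations biject with satisfying assignments --- are precisely the content such a proof would need. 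Deferring them leaves the proof entirely missing. A smaller but telling error: in your basic gadget, if $u$ accuses $v$ and $v$ vouches for $u$, the consistent configurations are ($u$ good, $v$ bad) and ($u$ bad, $v$ bad), so it is $v$, the accused voucher, that is pinned as corrupt, not the accuser $u$ as you claim.

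There is also a structural mismatch in the reduction you sketch. The problem in the conjecture is a \emph{total search} problem: when $b \leq m(G)$, a node that is truthful in every consistent configuration is guaranteed to exist, and the algorithm must merely output some such node --- it is not an oracle for the decision question ``is this particular vertex $v$ guaranteed truthful?''. Your final step assumes you can query a designated literal-vertex and read off whether the corresponding literal ``must hold in every satisfying assignment,'' but a solver for the search problem only hands you one truthful node of its choosing, possibly an anchor or padding vertex carrying no information about $\phi$; moreover, even granting such queries, ``must hold in every satisfying assignment'' is a backbone-type (coNP-flavored) question, and you give no argument converting answers to it into a decision procedure for 3-SAT satisfiability. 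This totality issue is a known obstruction to proving NP-hardness of always-solvable search problems by a straightforward Karp-style reduction, and overcoming it (e.g., via a carefully designed Turing reduction) is the genuinely hard part that your proposal does not address.
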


 

\subsection{SSE-Hardness of Approximation for $m(G)$}
\label{sec:hardness-proof}
In this section, we show the hardness of approximation result for $m(G)$ within any constant factor under the Small Set Expansion (SSE) Hypothesis \cite{RS10}.
Specifically, we prove Theorem~\ref{thm:2}.

\begin{theorem}
\label{thm:2}
For every $\beta > 1$, there is a constant $\epsilon > 0$ such that the following is true. Given a graph $G = (V,E)$, it is SSE-hard to distinguish between the case where $m(G) \leq \epsilon \cdot |V|$ and $m(G) \geq \beta \cdot \epsilon \cdot |V|$. Or in other words, the problem of approximating the critical number of corrupt nodes for a graph to within any constant factor is SSE-hard.
\end{theorem}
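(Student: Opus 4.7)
The plan is to invoke Lemma~\ref{lemma_2_approx} to pass from approximating $m(G)$ to approximating $f(G) := \min_{k}\bigl(S_G(k)+k\bigr)$, which by that lemma is equivalent up to a factor of $2$. It then suffices to exhibit, for every $\alpha > 1$, a polynomial-time reduction from SSE (in the strong form of Conjecture~\ref{SSE}) whose output graph satisfies $f(G)\le \epsilon_0|V|$ in the Yes case and $f(G)\ge \alpha\cdot\epsilon_0|V|$ in the No case, and then choose $\alpha = 2\beta$ so that the two-fold slack of Lemma~\ref{lemma_2_approx} still yields the desired $\beta$-gap for $m(G)$. The reduction I would use is the identity on the input $d$-regular SSE graph $G$, with the SSE parameters $q$ and $\epsilon$ chosen at the end to achieve the target gap.

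For the completeness direction, suppose $V = S_1\sqcup\cdots\sqcup S_q$ is the Yes-case partition into equi-sized blocks with $\Phi_G(S_i)\le 2\epsilon$. By Remark~\ref{rmk1}, at most $2\epsilon|E| = \epsilon d|V|$ edges cross the partition; selecting one endpoint from each crossing edge produces a set $U$ of size at most $\epsilon d|V|$ whose removal leaves connected components contained in the $S_i$, each of size at most $|V|/q$. Hence $f(G)\le(\epsilon d + 1/q)|V|$. For soundness, take any separator $U$ and any $k$ realizing $f(G)=|U|+k$, where $k$ is an upper bound on component sizes of $G\setminus U$. If $k\ge |V|/10$ we are done. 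Otherwise greedily union components of $G\setminus U$ into a set $S$; because each component contributes at most $|V|/10$, the aggregation can be stopped with $|V|/10\le |S|\le |V|/2$. Every edge between $S$ and $V\setminus S$ must be incident to $U$, since distinct components of $G\setminus U$ are pairwise non-adjacent, so $d|U|\ge |E(S,V\setminus S)|\ge \Omega(\sqrt{\epsilon}\,|E|)=\Omega(\sqrt{\epsilon}\,d|V|)$ by Remark~\ref{rmk2}. This yields $|U|\ge \Omega(\sqrt{\epsilon}\,|V|)$ and hence $f(G)\ge \Omega(\sqrt{\epsilon}\,|V|)$.

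The resulting gap ratio $f_{\text{No}}/f_{\text{Yes}} = \Omega\bigl(\sqrt{\epsilon}/(\epsilon d + 1/q)\bigr) = \Omega(1/\sqrt{\epsilon d})$ after first taking $q \gg 1/(\epsilon d)$ and then taking $\epsilon$ small; this exceeds any prescribed constant $2\beta$. Composing with Lemma~\ref{lemma_2_approx} then produces the desired $\beta$-factor inapproximability for $m(G)$ itself with threshold $\epsilon := \epsilon d + 1/q$. The step I expect to be most delicate is the soundness argument: in particular, verifying that whenever the largest component size $k$ is small the components can be greedily aggregated to land in the window $[|V|/10,\,9|V|/10]$ required by Remark~\ref{rmk2}, and checking that the $d$ factors cancel appropriately so that the gap does not degrade with the regularity of the SSE instance. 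This mirrors the corresponding bottleneck in the Austrin--Pitassi--Wu treewidth reduction cited by the authors, and I would follow their aggregation strategy while adapting the parameter choices to the quantity $\min_k(S_G(k)+k)$ rather than treewidth.
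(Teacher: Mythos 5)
Your high-level strategy (pass to $f(G):=\min_k(S_G(k)+k)$ via Lemma~\ref{lemma_2_approx} and run a gap reduction from the strong form of SSE, Conjecture~\ref{SSE}) matches the paper, but your choice of reduction --- the identity map on the SSE instance $G$ --- does not work, and the place it breaks is exactly the point you flagged as delicate: the degree factors do not cancel. In the Yes case, Remark~\ref{rmk1} gives at most $2\epsilon|E|=\epsilon d|V|$ crossing edges, and converting this edge cut into a vertex separator by taking one endpoint per crossing edge costs up to $\epsilon d|V|$ vertices, so your completeness bound is $f(G)\leq(\epsilon d+1/q)|V|$. Your soundness bound (which is essentially correct as an argument) gives only $f(G)\geq\Omega(\sqrt{\epsilon}\,|V|)$. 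The ratio is therefore $\Theta\bigl(\sqrt{\epsilon}/(\epsilon d)\bigr)=\Theta\bigl(1/(\sqrt{\epsilon}\,d)\bigr)$ (not $1/\sqrt{\epsilon d}$ as written), and this collapses to below $1$ as soon as $d\gtrsim 1/\sqrt{\epsilon}$. Crucially, $\epsilon$, $q$, $\gamma$ must be fixed constants \emph{before} the SSE instance is given, while the degree $d$ is part of the instance and is not bounded by Conjecture~\ref{SSE}; so you cannot take ``$\epsilon$ small after $q\gg 1/(\epsilon d)$'' to beat $2\beta$, and the threshold $\epsilon_0=\epsilon d+1/q$ is not even a constant independent of the input. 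Hence the identity reduction yields no constant-factor gap, and no uniform constant $\epsilon$ as required by the theorem statement.

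This is precisely why the paper (following Austrin--Pitassi--Wu) does not use $G$ itself but builds the bipartite auxiliary graph $G'$: each vertex of $G$ is split into $r=d/2$ copies and each edge of $G$ becomes a vertex of $G'$, so $|V'|=2|E|$ and all costs are measured against $|E|$ rather than $|V|$. Cutting a crossing edge then costs one vertex of $G'$ (not an endpoint of degree $d$), which makes the Yes-case budget $O(\epsilon|E|)=O(\epsilon|V'|)$ (Lemma~\ref{yeslemma}), while the No-case argument transfers a hypothetical small separator of $G'$ back to a balanced cut of $G$ and uses Remark~\ref{rmk2} to force $\Omega(\sqrt{\epsilon}|E|)=\Omega(\sqrt{\epsilon}|V'|)$ corrupted ``edge vertices'' (Lemma~\ref{nolemma}); the resulting gap $\epsilon^{-0.49}$ is independent of $d$ and can be made larger than any prescribed constant by choosing $\epsilon$ small. (The specific choice $r=d/2$, rather than $r$ arbitrarily large as in the treewidth reduction, is also needed here so that the Yes-case count stays $O(\epsilon|V'|)$.) To repair your proof you would need to incorporate this vertex-splitting/edge-vertex gadget, or otherwise justify an assumption that the hard SSE instances have degree bounded in terms of $\epsilon$, which the hypothesis does not provide.
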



In order to prove Theorem \ref{thm:2},
we construct a reduction similar to \cite{APW12}, and show that the bad party can control auxiliary graph of the \textbf{Yes} case of SSE with $b=O(\epsilon|V'|)$ and cannot control the auxiliary graph of the \textbf{No} case of SSE with $b=\Omega(\epsilon^{0.51}|V'|)$. 


Given an undirected $d$-regular graph $G=(V,E)$, construct an auxiliary undirected graph $G'=(V',E')$ in the following way \cite{APW12}. Let $r = d/2$. For each vertex $v^i\in V$, make $r$ copies of $v^i$ and add to the vertex set of $G'$, denoted $v^i_1,\cdots, v^i_r$. Denote the resulting set of vertices as $\tilde{V}=V\times \{1,\cdots,r\}$. Each edge $e^k \in E$ of $G$ becomes a vertex in $G'$, denoted $e^k$. Denote this set of vertices as $\tilde{E}$. In other words, $V'=\tilde{V}\cup \tilde{E} = V\times \{1,\cdots,r\} \cup E$. There exists an edge between a vertex $v^i_j$ and a vertex $e^k$ of $G'$ if $v^i$ and $e^k$ were adjacent edge and vertex pair in $G$. Note that $G'$ is a bipartite $d$-regular graph with $d/2|V|+|E|=2|E|$ vertices. 


\begin{lemma}
Suppose $q=1/\epsilon$, and $G$ can be partitioned into $q$ equi-sized sets $S_1,\cdots,S_q$ such that $\Phi_G(S_i)\leq 2\epsilon$ for every $1\leq i\leq q.$ Then the bad party can control the auxiliary graph $G'$ with at most $4\epsilon |E| = 2 \epsilon|V'|$ nodes. 
\label{yeslemma}
\end{lemma}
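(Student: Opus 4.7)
The plan is to exhibit an explicit small $k$-vertex separator of the auxiliary graph $G'$, and then invoke the upper bound direction of Lemma \ref{lemma_2_approx}: if $S_{G'}(k) + k \le 4\eps|E|$ for some $k$, then $m(G') \le 4\eps|E| = 2\eps|V'|$ and the bad party's strategy is dictated by the proof of that lemma (corrupt the separator together with one remaining component, have the separator nodes report arbitrarily and the component nodes mutually vouch).

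The separator I will use is the set of ``crossing'' edge-vertices
$$C \;=\; \bigl\{\, e^k \in \tilde E \;:\; e^k = \{u,v\} \text{ with } u \in S_i,\, v \in S_j,\, i \ne j \,\bigr\} \;\subseteq\; V'.$$
By Remark \ref{rmk1}, the total number of edges of $G$ not contained in some single $S_i$ is at most $2\eps|E|$, so $|C| \le 2\eps|E|$. Removing $C$ and incident edges from $G'$ breaks $G'$ into one connected piece $H_i$ per part $S_i$: each $H_i$ consists of the $r$ copies $\{v_j : v\in S_i,\, j\in [r]\}$ together with the edge-vertices $\{e^k : e^k \subseteq S_i\}$. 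The first contributes $r|S_i| = (d/2)\eps|V|$ vertices, and the second contributes at most $d|S_i|/2 = (d/2)\eps|V|$ vertices, so
$$|H_i| \;\le\; d\,|S_i| \;=\; d\eps|V| \;=\; 2\eps|E|,$$
using $d|V|=2|E|$. Hence $C$ is a $k$-vertex separator of $G'$ with $k \le 2\eps|E|$ and $|C| \le 2\eps|E|$.

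Applying the upper bound in Lemma \ref{lemma_2_approx} to $G'$,
$$m(G') \;\le\; S_{G'}(k) + k \;\le\; 2\eps|E| + 2\eps|E| \;=\; 4\eps|E| \;=\; 2\eps|V'|,$$
since $|V'| = 2|E|$. Concretely, the bad party corrupts every vertex in $C$ and every vertex in one arbitrarily chosen $H_i$; the $C$-vertices give arbitrary reports while the $H_i$-vertices report each other as truthful and report any neighbor in $C$ as corrupt. As argued in the proof of Lemma \ref{lemma_2_approx}, this configuration is indistinguishable from the one where some other $H_j$ is the fully-corrupt component, so no single vertex can be certified truthful.

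The only subtle step is the component-size bound; the rest is bookkeeping. One minor thing to double-check when writing the full proof is that each $H_i$ is indeed one connected component of $G' \setminus C$ (not several), though for this argument connectivity is not actually required --- what matters is only the size upper bound, since a $k$-vertex separator only demands that each connected component has size at most $k$. Thus the main (and essentially only) obstacle is translating the expansion guarantee of Remark \ref{rmk1} into the two bounds $|C|\le 2\eps|E|$ and $\max_i |H_i| \le 2\eps|E|$ simultaneously, which the choice $q = 1/\eps$ makes work cleanly.
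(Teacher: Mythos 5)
Your proposal is correct and follows essentially the same route as the paper: the corrupted set you describe ($C$ = crossing edge-vertices together with one block $H_i = S_i^* \cup \{e^k \subseteq S_i\}$) is exactly the paper's strategy, with the same counting $|C| \le 2\eps|E|$ and $|H_i| \le 2\eps|E|$. The only cosmetic difference is that you delegate the indistinguishability argument to the upper-bound direction of Lemma \ref{lemma_2_approx} instead of restating the explicit reports, which is a valid shortcut.
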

\begin{proof}[Proof of Lemma \ref{yeslemma}]
Notice by Remark \ref{rmk1}, the total number of edges in $G$ not contained in one of the $S_i$ is at most $2\epsilon|E|.$ 

This implies that a strategy for the bad party to control graph $G'$ is as follows. Control vertex $e^k \in \tilde{E}$ if $e^k\in E$ is not contained in any of the $S_i$s in $G$. Call the set of such vertices $E^*\subseteq \tilde{E}$. Let $S_i^*\subseteq V'$ be the set that contains all $r$ copies of nodes in $S_i \subseteq V$. Control one of the $S_i^*s,$ say $S_1^*$. Control all the edge nodes in $\tilde{E}$ that are adjacent to $S_1^*$. Call this set $\mathcal{N}(S_1^*)$. The corrupt nodes in $S_1^*\cup \mathcal{N}(S_1^*)$ report their neighbors in $S_1^*\cup \mathcal{N}(S_1^*)$ as good, and report $E^*$ as bad. Nodes in $E^*$ can effectively report however they want; suppose they report every neighboring node as bad. Then, it is impossible to identify even one truthful node, since assigning any $S_i^*$ as corrupt is consistent with the report and within bad party's budget.

This strategy controls $|E^*|+|S^*_i|+|\mathcal{N}(S^*_1)\setminus E^*|$ nodes on $G'$. Note that $|\mathcal{N}(S^*_1)\setminus E^*|$ is equal to the number of edges that are totally contained in $S_1$ on $G$, which is bounded by $|S_1|\cdot d/2$ (that is if all edges adjacent to $S_1$ are totally contained in $S_1$). If $q=1/\epsilon$, this strategy amounts to controlling $|E^*|+|S^*_i|+|\mathcal{N}(S^*_1)\setminus E^*|\leq 2\epsilon |E|+d/2 \cdot |V|/q + |V|/q\cdot d/2 = 4 \epsilon |E| = 2\epsilon |V'|$ nodes on $G'.$ Notice, this number is guaranteed to be smaller than $1/2|V'|,$ as long as $q>4$.

\label{rmk3}
\end{proof}

Note that, different from the argument in \cite{APW12}, we cannot take $r$ to be arbitrarily large (e.g. $>O(|V||E|)$). This is because when $r$ is large, $2\epsilon |E|+ r \cdot |V|/q = O(\epsilon(|E|+|V'|)) = O(\epsilon|V'|),$ and will not be comparable with the $O(\sqrt{\epsilon}|E|)$ in Lemma~\ref{nolemma}.


\begin{lemma}
Let $G = (V, E)$ be an undirected $d$-regular graph with the property that for every $|V |/10 \leq
|S| \leq 9|V |/10$ we have $|E(S, V \setminus S)| \geq \Omega(\sqrt{\epsilon}|E|)$. If bad party controls $O(\epsilon^{0.51}|E|)=O(\epsilon^{0.51}|V'|) < 1/2|V'|$ nodes on the auxiliary graph $G'$ constructed from $G$, we can always find a truthful node on $G'$. 
\label{nolemma}
\end{lemma}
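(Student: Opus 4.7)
The plan is to bound $m(G')$ from below using the vertex-separation characterization of Lemma~\ref{lemma_2_approx} and then invoke Theorem~\ref{thm:1}. Explicitly, if every $k$-vertex separator $U$ of $G'$ satisfies $|U| + k \geq C\epsilon^{0.51}|V'|$ for a large enough constant $C$, then $m(G') \geq (C/2)\epsilon^{0.51}|V'|$, so a corrupt budget of $b = O(\epsilon^{0.51}|V'|)$ with a sufficiently small hidden constant falls below $m(G')/2$ and Algorithm~\ref{alg1} recovers a truthful node. I will prove the separator bound by contradiction: assume there exist $k \geq 0$ and a $k$-vertex separator $U$ of $G'$ with $|U| + k \leq \eta|V'|$ for $\eta = c\epsilon^{0.51}$ and $c$ small, and I will contradict the expansion estimate of Remark~\ref{rmk2}.

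Split $U = U_V \cup U_E$ with $U_V \subseteq \tilde{V}$ and $U_E \subseteq \tilde{E}$, and for each $v \in V$ let $r_v$ denote the number of copies of $v$ lying in $U_V$. The key structural feature of $G'$ is that any two surviving copies of the same $v$ share a common neighbor in $G' \setminus U$ (namely, any surviving edge-vertex at $v$), so they sit in a single component of $G' \setminus U$ unless every edge-vertex incident to $v$ lies in $U_E$. Define the \emph{degenerate} families $V^{*} = \{v : \text{every edge-vertex at } v \text{ lies in } U_E\}$ and $V^{\dagger} = \{v : r_v > r/2\}$; a double-counting gives $|V^{*}| \leq 2|U_E|/d$ and $|V^{\dagger}| \leq 4|U_V|/d$, so $|V^{*} \cup V^{\dagger}| \leq 4|U|/d \leq 4\eta|V|$ using $|V'| = d|V|$. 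Every remaining $v \in V \setminus (V^{*} \cup V^{\dagger})$ sends its $\geq r/2$ surviving copies into a single well-defined component $C_v$ of $G' \setminus U$, so the projection $v \mapsto C_v$ partitions $V \setminus (V^{*} \cup V^{\dagger})$ into classes, each of size at most $2|C_v|/r \leq 4k/d \leq 4\eta|V|$.

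I then bunch these classes greedily into two disjoint sets $V_A, V_B \subseteq V$ whose sizes both lie in $[|V|/10, 9|V|/10]$; this is feasible provided $\eta$ is small enough that both the residual $|V^{*} \cup V^{\dagger}|$ and the maximum class size are strictly less than, say, $|V|/10$. Any edge $e = (u, w) \in E$ with $u \in V_A$ and $w \in V_B$ must have $e \in U_E$: otherwise $e$ is an edge-vertex in some surviving component, and being adjacent to at least one surviving copy of each of $u, w$ it would equal both $C_u$ and $C_w$, contradicting the disjointness of the $V_A$- and $V_B$-classes. By Remark~\ref{rmk2}, $|E(V_A, V \setminus V_A)| \geq \Omega(\sqrt{\epsilon}|E|)$; at most $d|V^{*} \cup V^{\dagger}| = O(\eta|E|)$ of these edges can land in $V^{*} \cup V^{\dagger}$, so $|U_E| \geq |E(V_A, V_B)| \geq \Omega(\sqrt{\epsilon}|E|) - O(\eta|E|) = \Omega(\sqrt{\epsilon}|V'|)$. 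Since $\eta = c\epsilon^{0.51} = o(\sqrt{\epsilon})$ as $\epsilon \to 0$, this contradicts $|U_E| \leq |U| \leq \eta|V'|$. The main obstacle I expect is the two-sided bookkeeping: one has to excise the degenerate families $V^{*}$ and $V^{\dagger}$ without unbalancing the greedy partition or absorbing too many of the expansion edges, and one has to choose the hidden constants so that the $\epsilon^{0.51}$ budget really sits strictly inside the $\sqrt{\epsilon}$ expansion slack of Remark~\ref{rmk2}.
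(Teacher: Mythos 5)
Your proposal is correct and follows essentially the same route as the paper's proof: both pass through the separator characterization of Lemma~\ref{lemma_2_approx}, project a hypothetical small separator of $G'$ down to a balanced two-part partition of $V$, and contradict the expansion bound of Remark~\ref{rmk2} because every crossing edge of $G$ must appear as an edge-vertex inside the separator. The only difference is bookkeeping: the paper merges the small components into two groups $T_1',T_2'$ and takes $T_1,T_2$ to be the vertices with \emph{some} surviving copy on each side (splitting doubly-represented vertices when forming the balanced partition), whereas you assign each vertex to the unique component containing its surviving copies after excising the degenerate sets $V^{*},V^{\dagger}$ --- an equivalent and equally valid way of handling the same issue.
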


\begin{proof}[Proof of Lemma \ref{nolemma}]

Assume towards contradiction that the bad party controls $O(\epsilon^{0.51}|E|)$ vertices of graph $G',$ and we can't identify a truthful node. 
\begin{claim}
If the bad party controls $O(\epsilon^{0.51}|E|)$ vertices of graph $G',$ and it is impossible to identify a truthful node, then there exists a set $C$ of size $O(\epsilon^{0.51}|E|)$ and separates $V'\backslash C$ into sets $\{T'_i\}_{i=1,\cdots,\ell}$, each of size $O(\epsilon^{0.51}|E|)$.
\label{sep}
\end{claim}

\begin{proof}[Proof of Claim \ref{sep}]
Since the bad nodes can control $G'$ with $O(\epsilon^{0.51}|E|)$ vertices, 
$m(G') \leq O(\epsilon^{0.51}|E|)$. By the lower bound in Lemma \ref{lemma_2_approx}, $\min_k (S_{G'}(k)+k) \leq 2 m(G') \leq O(\epsilon^{0.51}|E|)$. Let $k^* = \arg \min_k (S_{G'}(k) + k)$. Then $k^*\leq O(\epsilon^{0.51}|E|)$, $S_{G'}(k^*)\leq O(\epsilon^{0.51}|E|)$. By definition of $S_{G'}(k^*)$, there exists a set of size $S_{G'}(k^*)$ whose removal separates the remainder of the graph $G'$ to connected components of size at most $k^*$.
\end{proof}

Let $C$ and $T'_i$ be the sets guaranteed by Claim \ref{sep}. Note we have taken $r = d/2$, and thus $|\tilde{V}| = |\tilde{E}|$. In other words, half of the $V'$ are ``vertex'' vertices $\tilde{V}$, and half are ``edge'' vertices $\tilde{E}$. 
Therefore, with sufficiently small $\epsilon$, $|C\cap \tilde{V}|\leq |C| < 1/2|\tilde{V}|$, $|(\cup_{i=1}^{\ell}T_i')\cap \tilde{V}|\geq 1/2|\tilde{V}|$, $|T_i'\cap \tilde{V}|\leq |T'_i|<3/10|\tilde{V}|$ for every $i$. Therefore, we can merge the different $T_i'$s in Claim \ref{sep}, and have two sets $T_1'$ and $T_2',$ such that $|T_1'\cap \tilde{V}| \geq |\tilde{V}|/5$ and $|T_2'\cap \tilde{V}| \geq |\tilde{V}|/5$. Furthermore, $T_1'$ and $T'_2$ are disjoint, and $T_1', T_2'$, and $C$ cover $V'$. 


Similar to the proof of Lemma 5.1 in \cite{APW12}, we let $T_1\subseteq V$ (resp. $T_2\subseteq V$) be the set of vertices $v\in V$ such that some copy of $v$ appears in $T_1'$ (resp. $T_2'$). Let $S\subseteq V$ be the set of vertices $v\in V$ such that \emph{all} copies of $v$ appear in $C$. Since $|T_1'\cap \tilde{V}|,|T_2' \cap \tilde{V}|\geq |\tilde{V}| / 5= r|V|/5,$ both $|T_1|,|T_2|\geq |V|/5.$ Furthermore, we observe that $T_1 \cup T_2 \cup S = V$, which follows since $T_1' \cup T_2' \cup C = V'$. Now we can lower bound $|T_1 \cup T_2|$ as follows.


$$|T_1\cup T_2| = |V\backslash S| \geq |V|-|C|/r \geq |V| - c\epsilon^{0.51}|E|/r = |V|-c\epsilon^{0.51}|V|,$$
where the first equality uses the fact that $T_1 \cup T_2 \cup S = V$ and that $T_1 \cup T_2$ is disjoint from $S$, and the following inequality uses the fact that $|S| \leq |C| / r$, which follows by definition.


Since $|T_1 \cup T_2|$ is sufficiently large, we can find a balanced partition of $T_1\cup T_2$ into sets $S_1\subseteq T_1$, $S_2\subseteq T_2$, such that $S_1\cap S_2=\emptyset, S_1\cup S_2 = T_1\cup T_2$, and $|V|/10 \leq
|S_1|,|S_2| \leq 9|V|/10$. From the property of $G$ that $E(S,V\setminus S)\geq \Omega(\sqrt{\epsilon}|E|)$ in Lemma \ref{nolemma} and the fact that $G$ is $d$-regular, we know that 

$$E(S_1,S_2) = E(S_1,V \setminus S_1)-E(S_1,S)\geq \alpha \sqrt{\epsilon}|E|-d(\epsilon^{0.51}|E|/r) = \alpha \sqrt{\epsilon}|E|-2\epsilon^{0.51}|E|=\Omega(\sqrt{\epsilon}|E|),$$

for some constant $\alpha$. In the first equality we use the fact that $S_1, S_2, S$ form a partition of $V$. Thus $E(S_1,V\backslash S_1) = E(S_1, S_2\cup S) = E(S_1,S_2) + E(S_1, S).$



Note that since $S_1\subseteq T_1$ and $S_2\subseteq T_2$, and $T_1'$ and $T_2'$ do not have edge between them in $G'$, the edges $E(S_1,S_2)$ all have to land as "edge vertices" in $C$. In other words, for any $u\in S_1$, and $v \in S_2,$ if $(u,v)\in E$, then the vertex $(u,v)\in V'$ has to be included in the set $C$, thus $|C|\geq \Omega(\sqrt{\epsilon}|E|)$.

This contradicts the fact that there are only  $O(\epsilon^{0.51}|E|)$ vertices in $C$. 

\end{proof}

Combining Lemma \ref{yeslemma} and Lemma \ref{nolemma}, Theorem \ref{thm:2} follows in standard fashion. We give a proof here for completeness.


\begin{proof}[Proof of Theorem \ref{thm:2}]
Suppose for contradiction that there exists some constant $\beta > 0$ such that there is polynomial time algorithm $\mathcal{A}$ that does the following. For any $\epsilon' > 0$ and an arbitrary graph $G' = (V',E')$, it can distinguish between the case where $m(G') \leq \epsilon' \cdot |V'|$ and $m(G') \geq \beta \cdot \epsilon' \cdot |V'|$. Specifically, we will suppose this holds for $\epsilon' < \frac{1}{\beta^{2.05}}$. Then we can use this algorithm to decide the SSE problem as follows.

Fix $\epsilon < \frac{1}{1.5 \beta^{2.05}}$, $q = 1 / \epsilon$, $\gamma > 0$ sufficiently small ($\leq o(\sqrt{\epsilon})$ suffices). Let $G = (V,E)$ be an arbitrary input to the resulting instance of the SSE decision problem (from Conjecture~\ref{SSE}). Construct the graph $G' = (V', E')$ from $G$ as done in the beginning of Section~\ref{sec:hardness-proof}. 

If $G$ was from the YES case of Conjecture~\ref{SSE}, then $m(G') \leq 1.5 \epsilon |V'|$ (Lemma~\ref{yeslemma}). If $G$ was from the NO case of Conjecture~\ref{SSE}, then $m(G') > \epsilon^{0.51} |V'|$ (Lemma~\ref{nolemma}). We can invoke our algorithm $\mathcal{A}$ to distinguish these two cases, by letting $\epsilon' = 1.5 \epsilon$ and noting that $\beta < (1/(\epsilon')^{0.49})$ by design, which would decide the problem in Conjecture~\ref{SSE} in polynomial time.
\end{proof}





Now, we can obtain the following Corollary \ref{corr:1} from Theorem \ref{thm:2}.

\begin{corollary}
Assume the SSE Hypothesis and that P $\neq$ NP. Fix any $\beta > 1$. There does not exist a polynomial-time algorithm that takes as input an arbitrary graph $G = (V,E)$ and outputs a set of nodes $S$ with size $|S|\leq O(\beta \cdot m(G))$, such that corrupting $S$ prevents the central agency from finding a truthful node.
\label{corr:1}
\end{corollary}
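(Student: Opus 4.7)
The plan is to derive Corollary~\ref{corr:1} from Theorem~\ref{thm:2} by a standard gap reduction: any polynomial-time algorithm that outputs a corrupting set within a constant factor of $m(G)$ could be used to distinguish the two sides of the Theorem~\ref{thm:2} gap, contradicting SSE-hardness. Throughout, let $c$ denote the (universal) constant hidden in the $O(\beta \cdot m(G))$ notation, so the hypothesized approximation guarantee reads $|S| \leq c \beta \cdot m(G)$.

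The first step is to assume, toward contradiction, that such an algorithm $\mathcal{A}$ exists. I would then invoke Theorem~\ref{thm:2} with a sufficiently larger constant, say $\beta' := 2 c \beta$, obtaining a constant $\epsilon > 0$ for which distinguishing $m(G) \leq \epsilon |V|$ from $m(G) \geq \beta' \epsilon |V|$ is SSE-hard. The key observation, and the only real content of the proof, is that for \emph{any} set $S$ whose corruption prevents identification of a truthful node we automatically have $m(G) \leq |S|$, directly from Definition~\ref{def:critical-num}. This gives two-sided control over $|S|$: by the approximation guarantee we get an upper bound of $c \beta \cdot m(G)$, and by the definition of $m(G)$ we get a lower bound of $m(G)$ itself.

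Applying $\mathcal{A}$ to an input $G$ of the gap problem, these two bounds yield
\begin{itemize}
\item in the $m(G) \leq \epsilon |V|$ case, $|S| \leq c\beta \cdot \epsilon |V|$, and
\item in the $m(G) \geq \beta' \epsilon |V| = 2 c \beta \cdot \epsilon |V|$ case, $|S| \geq m(G) \geq 2 c\beta \cdot \epsilon |V|$.
\end{itemize}
A threshold test on $|S|$ at, say, $1.5 c \beta \cdot \epsilon |V|$ therefore decides the gap problem in polynomial time. Combined with the SSE Hypothesis (which upgrades the SSE-hardness to NP-hardness) and $\mathrm{P} \neq \mathrm{NP}$, this yields the desired contradiction.

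I do not expect any substantive obstacle: the reduction is essentially bookkeeping, and the only subtlety is making sure the constant factor hidden in the $O(\cdot)$ does not depend on $\beta$, so that the gap $\beta'$ in Theorem~\ref{thm:2} can be chosen strictly larger than $c\beta$. This is implicit in the way Corollary~\ref{corr:1} is phrased (a fixed $\beta$ with a universal hidden constant), so the argument goes through cleanly.
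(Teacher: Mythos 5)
Your proposal is correct and is exactly the argument the paper intends: the paper treats the corollary as an immediate consequence of Theorem~\ref{thm:2}, relying on the same key observation that any successful corrupting set $S$ satisfies $m(G)\leq |S|$, so a constant-factor placement algorithm would decide the SSE-hard gap by thresholding $|S|$. Your explicit handling of the hidden constant via $\beta' = 2c\beta$ is sound and fills in the bookkeeping the paper omits.
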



In summary, the analysis in this section tells us that given an arbitrary graph, it is hard for bad party to corrupt the graph with minimal resources. On the other hand, if the budget of bad nodes is a factor of two less than $m(G)$, a good party can always be detected with an efficient algorithm, e.g. using Algorithm \ref{alg1}.


\subsection{An $O(\log |V|)$ Approximation Algorithm for $m(G)$}
\label{sec:approx-algorithm}
In light of the SSE-hardness of approximation of $m(G)$ within any constant, and the close relation of $m(G)$ with $k$-vertex separator, we leverage the best known approximation result for $k$-vertex separator to propose an $O(\log n)$ approximation algorithm for $m(G)$. It is useful as a test for central authorities for measuring how corruptible a graph is. Notably, it is also a potential algorithm for (computationally restricted) bad party to use to decide which nodes to corrupt.

The paper \cite{Lee17} presents an bicritera approximation algorithm for $k$-vertex separator, with the guarantee that for each $k$, the algorithm finds a subset $B_k \subseteq V$ such that $|B_k|\leq O(\frac{\log k}{\epsilon})\cdot S_G(k)$, and the induced subgraph $G_{V\backslash B_k}$ is divided into connected components each of size at most $k/(1-2\epsilon)$ vertices.

\begin{proposition}[Theorem 1.1, \cite{Lee17}]
For any $\epsilon \in (0,1/2),$ there is a polynomial-time $(\frac{1}{1-2\epsilon},O(\frac {\log k} \epsilon))$- bicriteria approximation algorithm for $k$-vertex separator. 
\label{lee}
\end{proposition}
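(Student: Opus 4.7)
The plan is to prove the bicriteria guarantee via an LP relaxation combined with randomized region-growing rounding, adapting the spreading-metric framework of Even--Naor--Rao--Schieber to vertex-weighted separators.

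First I would write a spreading-metric LP. Introduce variables $x_v \in [0,1]$ indicating fractional membership of $v$ in the separator, together with distance variables $d(u,v) \geq 0$ subject to triangle inequality and the edge constraint $d(u,w) \leq d(u,v) + \tfrac{1}{2}(x_v + x_w)$ for every $vw \in E$ (so that ``traversing'' a vertex pays its separator weight). Impose a spreading constraint saying that for every $u \in V$, the fractional ball of radius $1/2$ around $u$ contains at most $k$ vertices; this can be linearized as $\sum_{v \in V} \max(0, 1/2 - d(u,v)) \leq k/2$ and is polynomially separable. The objective is $\min \sum_v x_v$. Any integral $k$-separator $B$ is feasible with value $|B|$ (set $x_v = 1$ on $B$, $0$ elsewhere; distances are $0$ within a component and $\geq 1/2$ across separator boundaries, so the ball around any $u$ contains only its component), hence $\mathrm{LP}^\ast \leq S_G(k)$.

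Next I would round by iterative region-growing. While uncovered vertices remain, pick an arbitrary uncovered $u$, sample a threshold $\tau \in [0, \tfrac{1}{2} - \epsilon]$ from a truncated exponential density with rate $\lambda = \Theta((\log k)/\epsilon)$, put into $B_k$ every boundary vertex $v$ with $d(u,v) \leq \tau < d(u,v) + x_v$, and carve off as a finished component all $v$ with $d(u,v) < \tau$. The cost analysis follows the Garg--Vazirani--Yannakakis charging argument adapted to vertex weights: the probability that any particular $v$ lies on the boundary for the current seed $u$ is at most $\lambda x_v$ multiplied by a factor controlled by the ratio of LP-mass just outside the ball to LP-mass just inside, which, via the memorylessness of the exponential and the $k$-size bound from spreading, telescopes to $O(\log k)$. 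Summing over all rounds yields $\mathbb{E}[|B_k|] \leq O((\log k)/\epsilon) \cdot \mathrm{LP}^\ast \leq O((\log k)/\epsilon) \cdot S_G(k)$. For the component-size guarantee, every carved component sits inside a ball of fractional radius $\tau \leq 1/2 - \epsilon$; combined with the spreading constraint (``radius $1/2$ ball contains at most $k$''), a straightforward slack computation shows the integral count of carved vertices is at most $k/(1 - 2\epsilon)$, where the factor $1/(1-2\epsilon)$ quantifies how much room $\epsilon$ leaves between $\tau$ and the spreading threshold $1/2$.

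The main obstacle is the simultaneous calibration of the two bounds: tightening the cost requires enlarging the range of $\tau$, which shrinks the gap to the spreading threshold and blows up the component-size factor, and vice versa. Getting the precise dependence $O((\log k)/\epsilon)$ together with $1/(1-2\epsilon)$ requires choosing $\lambda$ and the support of $\tau$ to balance the memoryless-decay factor against the slack left for the spreading inequality. A secondary subtlety is that the boundary of region growing now consists of \emph{vertices} rather than \emph{edges}, so the charging must account for the weight $x_v$ of a cut vertex being fully paid even if only a fraction of its neighborhood falls outside the ball; this is what forces the $1/\epsilon$ factor (reflecting the length of the permissible $\tau$-interval) in the cost bound and is the main place where the vertex-separator analysis genuinely departs from the classical edge-separator arguments.
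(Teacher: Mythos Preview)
The paper does not prove this proposition at all: it is quoted as Theorem~1.1 of \cite{Lee17} and the reader is explicitly referred to Section~3 of that paper for the algorithm. So there is no ``paper's own proof'' to compare your sketch against; what you have written is a reconstruction of the external result.

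That said, your plan is in the right family of techniques---a spreading-metric LP together with region-growing rounding is indeed how results of this form (and Lee's in particular) are obtained. There is one quantitative slip worth flagging. With the spreading constraint you wrote, $\sum_{v} \max\bigl(0,\tfrac12 - d(u,v)\bigr) \le k/2$, and with the threshold range $\tau \in [0,\tfrac12-\epsilon]$, every carved vertex $v$ satisfies $\tfrac12 - d(u,v) \ge \epsilon$, so the spreading inequality only gives $|C|\cdot\epsilon \le k/2$, i.e.\ $|C|\le k/(2\epsilon)$, not $k/(1-2\epsilon)$ as claimed. To recover the stated $1/(1-2\epsilon)$ factor you should instead take $\tau \in [0,\epsilon]$: then each carved vertex contributes at least $\tfrac12-\epsilon$ to the sum, yielding $|C|\le k/(1-2\epsilon)$. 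Restricting $\tau$ to an interval of length $\epsilon$ is also precisely what forces the exponential rate $\lambda=\Theta((\log k)/\epsilon)$ and hence the $O((\log k)/\epsilon)$ cost factor, so with this correction the two parameters line up as stated. This is a parameter-calibration error rather than a structural gap in the argument.
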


Interested readers can refer to \cite{Lee17} Section 3 for the description of the algorithm. Leveraging this algorithm for $k$-vertex separator, we can obtain a polynomial-time algorithm for seeding corrupt nodes and preventing the identification of a truthful node.

\begin{theorem}[$O(\log |V|)$ Approximation Algorithm]
There is a polynomial-time algorithm that takes as input a graph $G = (V,E)$ and outputs a set of nodes $S$ with size $|S|\leq O(\log |V|\cdot m(G))$, such that corrupting $S$ prevents the central agency from finding a truthful node.
\label{thm:3}
\end{theorem}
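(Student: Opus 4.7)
The plan is to combine the tight-up-to-a-factor-of-two characterization from Lemma \ref{lemma_2_approx}, namely $\tfrac{1}{2}\min_k (S_G(k)+k) \leq m(G) \leq \min_k (S_G(k)+k)$, with the bicriteria polynomial-time approximation algorithm for $k$-vertex separators provided by Proposition \ref{lee}. If, for some choice of $k$, I can efficiently produce a corrupt set of size $O(\log|V|)\cdot (S_G(k)+k)$ that witnesses the impossibility of finding a truthful node, then minimizing over $k$ and applying the lower bound in Lemma \ref{lemma_2_approx} immediately yields the desired $O(\log|V|)$-approximation of $m(G)$.

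The algorithm I propose is the following. Fix a constant $\epsilon \in (0,1/2)$, e.g.\ $\epsilon = 1/4$. For each $k \in \{1,2,\dots,|V|\}$, invoke the algorithm of Proposition \ref{lee} to produce a set $B_k \subseteq V$ with $|B_k| \leq O(\log k)\cdot S_G(k)$ whose removal partitions $V\setminus B_k$ into connected components each of size at most $k/(1-2\epsilon)$. Let $C^{(k)}$ denote the largest such component and set $S_k := B_k \cup C^{(k)}$. The algorithm outputs the set $S$ among $\{S_1,\dots,S_{|V|}\}$ of minimum size, together with the decomposition that produced it for use in the corrupt party's reporting strategy.

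To verify that corrupting $S$ prevents the central agency from identifying a truthful node, I reuse the reporting strategy from the upper-bound direction of Lemma \ref{lemma_2_approx}: the bad nodes in $C^{(k)}$ report their neighbors inside $C^{(k)}$ as truthful and their neighbors in $B_k$ as corrupt, while bad nodes in $B_k$ may report arbitrarily (say, every neighbor as corrupt). The key symmetry is that the components of $G\setminus B_k$ are pairwise non-adjacent, so for every component $C_i$ the alternative assignment ``$B_k\cup C_i$ is the corrupt set'' is consistent with the observed reports and uses at most $|B_k|+|C^{(k)}|=|S|$ corrupt nodes, since $C^{(k)}$ is the largest component. Hence every vertex in $V$ is bad under some valid assignment within budget, so no truthful node can be pinned down. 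Combining the per-$k$ size bound $|S_k| \leq |B_k| + k/(1-2\epsilon) \leq O(\log k)(S_G(k)+k)$ with the outer minimization and the lower bound in Lemma \ref{lemma_2_approx} gives $|S| \leq O(\log|V|)\cdot\min_k (S_G(k)+k) \leq O(\log|V|)\cdot m(G)$.

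The only technical wrinkle — more a bookkeeping matter than a genuine obstacle — is that Proposition \ref{lee} is bicriteria, producing components of size $k/(1-2\epsilon)$ rather than exactly $k$; this enlarges $|S_k|$ only by a constant factor and is absorbed into the $O(\cdot)$. Polynomial running time is immediate: the algorithm of Proposition \ref{lee} runs in polynomial time and is invoked $|V|$ times. No other ingredient is required.
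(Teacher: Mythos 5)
Your proposal is correct and follows essentially the same route as the paper: invoke the bicriteria algorithm of Proposition~\ref{lee} once for each $k=1,\dots,|V|$, take the $k$ minimizing separator size plus component size, corrupt the separator together with one component using the reporting strategy from the upper-bound direction of Lemma~\ref{lemma_2_approx}, and bound the output size via the lower-bound direction of that lemma. Your only (harmless, in fact slightly cleaner) deviation is corrupting the \emph{largest} remaining component rather than an arbitrary one, which makes the within-budget consistency of every alternative assignment $B_k\cup C_i$ immediate.
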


\begin{proof}
The algorithm is as follows. Call the bicriteria algorithm for approximating $k$-vertex separator in \cite{Lee17} $n$ times, once for each $k$ in $k=1,\cdots,n$, where $n = |V|$. Each time the algorithm outputs a set of vertices $B_k$ that divides the remaining graph into connected components with maximum size $g(k)$. Choose the $k^*$ for which the algorithm outputs the smallest value of $\min_k|B_k|+g(k)$. The bad party can control $B_{k^*}$ and one of the remaining connected components (the size of which is at most $g(k^*)$), and be sure to prevent the identification of one good node, by the same argument that lead to the upper bound in Lemma \ref{lemma_2_approx}.

We now prove that $|B_{k^*}|+g(k^*)$ is an $O(\log |V|)$ approximation for the quantity of consideration $\min_k S_G(k)+k$. For each $k$, we denote our approximation for $S_G(k)+k$ as $f(k):=|B_k|+g(k).$ Then by the guarantee given in Proposition \ref{lee}, we know

$$f(k)= |B_k| + g(k) \leq O\left(\frac{\log k}{\epsilon}\right)\cdot S_G(k) + \frac{1}{1-2\epsilon} k\leq O\left(\frac{\log k}{\epsilon}\right)\cdot (S_G(k) + k).$$

Thus $$\min_k f(k) \leq \min_k O\left(\frac{\log k}{\epsilon}\right)\cdot (S_G(k) + k)\leq O\left(\frac{\log n}{\epsilon}\min_k (S_G(k) + k)\right)\leq O\left(\log n \cdot m(G)\right).$$

The last inequality follows from the fact that that $\min_k (S_G(k) + k)/2\leq m(G)\leq \min_k (S_G(k) + k)$ in Lemma~\ref{lemma_2_approx}, and by taking $\epsilon$ to be a fixed constant, e.g. $\epsilon = 1/3$. So $\min_k f(k)$  provides an $O(\log n)$ approximation of $m(G)$. The algorithm consists of $n$ calls of the polynomial-time algorithm in Proposition~\ref{lee}, so is also polynomial-time.
\end{proof}

\section{Directed Graphs}
Here we present the variant of our problem on directed graphs. As discussed in \cite{AMP15}, this is motivated by the fact that in various auditing
situations, it may not be natural that any $u$ will be able to inspect $v$ whenever $v$ inspects $u$. 

Given a directed graph $D=(V,E_D)$, we are asked to to find $m(D)$, the minimal number of corrupted agents needed to prevent the identification of a single truthful agent. Firstly, since undirected graphs are special cases of directed graphs, it is clear that the worst case hardness of approximation results still hold. In this section, we will define a analogous notion of vertex separator relevant to corruption detection for directed graphs, and state the version of Theorem \ref{thm:1} for directed graphs. 

\begin{definition}[\textbf{Reachability Index}] On a directed graph $D=(V,E_D)$, say a vertex $s$ can reach a vertex $t$ if there exists a sequence of adjacent vertices (i.e. a path) which starts with $s$ and ends with $t$. Let $R_D(v)$ be the set of vertices that can reach a vertex $v$.
Define the \textbf{reachability index} of $v$ as $|R_D(v)|$, or in other words, as the total number of nodes that can reach $v$. 
\end{definition}

Based on the notion of reachability index, we design the following algorithm, Algorithm \ref{alg2}, for detecting one good node on directed graphs:


\begin{algorithm}[H]
Input: Directed graph $D$
\begin{itemize}
\item If node $u$ reports node $v$ as corrupt, remove both $u,v$ and any incident edges (incoming and outgoing). Remove a pair of nodes in each round. Continue until there are no bad reports left.
\item Call the remaining graph $H=(V_H,E_H)$. Declare a vertex in $H$ with maximum reachability index as good.
\end{itemize}
\caption{Finding one truthful vertex on directed graph $D$}
\label{alg2}
\end{algorithm}

Run Algorithm \ref{alg2} on directed graph $D$, and suppose the first step terminates in $i$ rounds. Then: 
\begin{itemize}
\item No remaining node reports out-neighbors as corrupt
\item $|V|-2i$ nodes remain in graph
\item $\leq b-i$ bad nodes remain in the graph, because each round in step 1 removes at least one bad node.
\end{itemize} 

The main idea is that, if there exists a node $v$ with reachability index larger than $b-i$, at least $b-i$ nodes claim (possibly indirectly) that $v$ is good, which means at least one good node also reports $v$ as good, and thus $v$ must be good. In the rest of the section, we use this observation to generalize Theorem \ref{thm:1}.

We define a notion similar to $k$-vertex separator on directed graphs, show that our notion provides a $2$-approximation for $m(D)$ when $D$ is a directed graph, and that the equivalent of Theorem \ref{thm:1} also holds in the directed case.

\begin{definition}[\textbf{$k$-reachability separator}] We say a set of vertices $S\subseteq V$ is a $k$-reachability separator of a directed graph $D=(V,E_D)$ if after the removal of $S$ and any adjacent edges, all vertices in the remaining graph are of reachability at most $k$.
\end{definition}

Since in an undirected graph, any pair of vertices can reach each other if and only if they belong to the same connected component, one can check that $k$-reachability separator on an undirected graph is exactly equivalent to a $k$-vertex separator. Thus we use a similar notation, $S_D(k)$, to denote the size of the minimal $k$-reachability separator on $D$.

\begin{lemma}[2-Approximation Lemma on Directed Graphs]
$$ \frac{1}{2}\min_k (S_{D}(k)+k) \leq m(D) \leq \min_k (S_{D}(k)+k)$$
\end{lemma}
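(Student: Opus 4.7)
My plan is to adapt the proof of Lemma~\ref{lemma_2_approx} to the directed setting, substituting the ancestor set $R_{D-S}(v)$ inside the induced subgraph $D-S$ for the role played by connected components of $G-S$ in the undirected case, and using Algorithm~\ref{alg2} in place of Algorithm~\ref{alg1}. The structural fact that makes this substitution work is a ``directed transitivity'' observation: if $S$ is a $k$-reachability separator of $D$ and $v \in V\setminus S$, then $|R_{D-S}(v)|\leq k$, and moreover for any $u\in V\setminus S$ with $u\notin R_{D-S}(v)$, every out-neighbor $x$ of $u$ also lies outside $R_{D-S}(v)$, since otherwise the edge $u\to x$ concatenated with a path from $x$ to $v$ in $D-S$ would witness $u\in R_{D-S}(v)$.

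For the upper bound $m(D) \leq \min_k (S_D(k)+k)$, let $k^* \in \arg\min_k(S_D(k)+k)$, take $S$ to be a minimum $k^*$-reachability separator of $D$, fix any $v_0 \in V\setminus S$, and set $T_0 := R_{D-S}(v_0)$, so $|T_0|\leq k^*$. The corrupt party controls $B_0 = S \cup T_0$, of size at most $S_D(k^*)+k^*$; each node in $T_0$ reports its out-neighbors in $S$ as bad and all other out-neighbors as good, and nodes in $S$ report arbitrarily (e.g.\ all out-neighbors as bad). I then claim that for every $v \in V$ the alternate bad set $B_v := S \cup R_{D-S}(v)$ is consistent with these reports, contains $v$, and has size at most $S_D(k^*)+k^*$. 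Checking consistency has two nontrivial cases. For a node $u \in V\setminus(B_0 \cup B_v)$ (good in both configurations), the transitivity observation applied to both $T_0$ and $R_{D-S}(v)$ shows $N^+(u)$ is disjoint from $T_0 \triangle R_{D-S}(v)$, so $u$'s truthful reports under $B_0$ are also truthful under $B_v$. For a node $u \in T_0 \setminus R_{D-S}(v)$ (bad in $B_0$ but good in $B_v$), the same observation shows $N^+(u)\cap R_{D-S}(v)=\emptyset$, so the programmed report ``$S$ bad, rest good'' is exactly the truth under $B_v$. Thus every $v$ admits a consistent configuration making it corrupt within budget, so identification is impossible.

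For the lower bound $m(D) \geq \tfrac{1}{2}\min_k(S_D(k)+k)$, suppose the corrupt party places $b = m(D)$ nodes adversarially and run Algorithm~\ref{alg2}. If step~1 terminates after $i$ pair-removal rounds, the residual graph $H$ has $|V|-2i$ vertices and at most $b-i$ remaining corrupt nodes. The vouching principle behind Algorithm~\ref{alg2} says that if some vertex $v$ had reachability strictly greater than $b-i$ in $H$, then some $H$-ancestor of $v$ would be truthful, and following a chain of ``good'' reports from that ancestor to $v$ would certify $v$ as truthful; since identification is impossible by assumption, every vertex of $H$ has reachability at most $b-i$, meaning the $2i$ removed vertices form a $(b-i)$-reachability separator of $D$. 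Hence $S_D(b-i)\leq 2i$, and setting $k=b-i$ yields
\[
m(D)=i+(b-i)\geq \tfrac{1}{2}S_D(b-i)+(b-i) \geq \min_k\!\left(\tfrac{1}{2}S_D(k)+k\right) \geq \tfrac{1}{2}\min_k(S_D(k)+k).
\]

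The main obstacle I expect is the bookkeeping in the upper-bound consistency check: one has to partition $V$ by membership in $B_0$ and $B_v$ into four classes and verify that the single pre-committed set of reports remains consistent under both configurations in each class. The directed transitivity lemma is what makes each class go through cleanly, and it is the exact analog of the fact that undirected connected components in $G-S$ are closed under taking edges, which was what made the undirected proof work.
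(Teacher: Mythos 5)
Your proposal is correct and follows essentially the same route as the paper: the upper bound corrupts a minimum $k^*$-reachability separator together with the reachability set of one chosen vertex and argues that swapping in $R_{D-S}(v)$ for any $v$ stays consistent with the reports (your four-class case analysis just spells out what the paper states more tersely), and the lower bound runs Algorithm~\ref{alg2}, uses the vouching argument to show the $2i$ removed vertices form a $(b-i)$-reachability separator, and concludes with the same chain of inequalities. The only cosmetic difference is that you corrupt the reachability set of an arbitrary vertex $v_0$ rather than one of maximum reachability index, which does not affect the argument.
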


\begin{proof}
The direction $m(D) \leq \min_k S_D(k)+k$ is proved as follows. Let $k^* = \arg\min_k (S_D(k)+k)$. If the corrupt party is given $\min_k (S_D(k)+k)$ nodes to allocate on $D$, it can first assign $S_D(k^*)$ nodes to a $k^*$-reachability separator $C$, such that the remaining nodes have reachability index at most $k^*$. Then it arbitrarily assigns one of the vertices $v^*$ with maximum reachability index plus its $R_H(v^*)$ as bad. The bad nodes in $R_H(v^*)$ report any neighbor in the separator $C$ as bad and any other neighbor as good. The nodes in the separator can effectively report however they want (e.g. report all neighboring nodes as bad). 

It is impossible to detect a single good node, because every node $v$ can only be reached by $R_H(v)$ and $C$. For every $v\in H$, it being assigned as corrupt or good is consistent with the reports. If $v$ is corrupt, $R_H(v)$ is also assigned as corrupt, thus all nodes in $H$ receive good reports from $R_H(v)$, bad reports from $C$ and give bad reports to $C$. If $v$ is truthful, all nodes still receive and give the same reports. So for every $v\in V_H$, assigning $R_H(v)$ as bad, and $V_H\setminus R_H(v)$ as good is consistent with the observed reports. It is impossible to find a good node in $H$ by definition.

The proof for $1/2 \min_k (S_{D}(k)+k) \leq m(D)$ is given by Algorithm \ref{alg2}. Let there be $m(D)$ bad nodes distributed optimally on the graph. By definition, these nodes prevent the identification of a good node. Run Algorithm~\ref{alg2}, and suppose the first step terminates in $i$ rounds. This means we have removed at least $i$ bad nodes, and there are at most $m(D)-i$ bad nodes left on $H$. If there exists a node on $H$ with reachability $m(D)-i$, then this node must be truthful, since there are not enough bad nodes left to corrupt all the nodes that can reach it, and all the reports in the remaining graph are good. Thus $|R(v)| < m(D)-i$ for any $v$. Therefore, the set of $2i$ removed nodes must be an $m(D)-i$ reachability separator. Hence, we can bound $m(D)$ as follows.

$$m(D)= (m(D)-i)+2i/2 \geq \min_k (k+S_D(k)/2) \geq \frac{1}{2} \min_k (S_D(k)+k)$$
where the first inequality follows from the fact that $2i \geq S_D(m(D) - i)$. 
\end{proof}

\begin{theorem}
\label{thm:4}
Fix a directed graph $D$ and suppose that the corrupt party has a budget $b \leq m(D) / 2$. Then the central agency can identify a truthful node, regardless of the strategy of the corrupt party, and without the knowledge of either $m(D)$ or $b$. Furthermore, the central agency's algorithm runs in linear time.
\end{theorem}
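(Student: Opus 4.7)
The plan is to mirror the proof of Theorem~\ref{thm:1}, with undirected connectedness replaced by reachability in $D$ and the 2-Approximation Lemma on Directed Graphs used in place of Lemma~\ref{lemma_2_approx}. First I would analyze the output of Algorithm~\ref{alg2}: each pair removed during the pruning phase contains at least one corrupt vertex, so after $i$ rounds at most $b-i$ corrupt nodes survive, every edge of the residual graph $H := D[V_H]$ carries a ``good'' report, and $H$ is nonempty (since $|V_H| = |V| - 2i \geq |V| - 2b > 0$ by the standing assumption $b < |V|/2$).

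The key structural observation, analogous to ``two good reports force equal type'' in the undirected case, is that truthfulness propagates along reach-paths in $H$: if a truthful $u$ reaches $v$ via $u = w_0 \to w_1 \to \cdots \to w_\ell = v$, then induction along the path (using that each $w_j$ reports $w_{j+1}$ as truthful, which is accurate because $w_j$ is truthful) shows $v$ is truthful. Equivalently, a vertex $v \in V_H$ can be consistent with being corrupt only if every vertex in $R_H(v)$ is corrupt, which requires $|R_H(v)| \leq b - i$. So it suffices to exhibit some $v^* \in V_H$ with $|R_H(v^*)| > b - i$, which is precisely what the algorithm tries to do by taking the reachability-maximizer.

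To close the loop, let $k^* := \max_{v \in V_H} |R_H(v)|$. Deleting the $2i$ removed vertices from $D$ leaves every surviving vertex with reachability at most $k^*$, so the deleted set is a $k^*$-reachability separator of $D$, giving $S_D(k^*) \leq 2i$. Combining this with the directed 2-approximation bound and the hypothesis $b \leq m(D)/2$,
\begin{equation*}
b - i \;\leq\; \tfrac{m(D)}{2} - i \;\leq\; \tfrac{1}{2}\min_k\bigl(S_D(k) + k\bigr) - i \;\leq\; \tfrac{1}{2}(2i + k^*) - i \;=\; \tfrac{k^*}{2} \;<\; k^*,
\end{equation*}
where the strict inequality uses $k^* \geq 1$, which holds because $H$ is nonempty. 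Hence the reachability-maximizer $v^*$ is reached by strictly more vertices than the remaining corrupt budget can cover, so at least one of its reachers is truthful, forcing $v^*$ itself to be truthful.

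The step I expect to be the main obstacle is the linear-time claim. The pruning phase is clearly $O(|E_D|)$ by the same queue implementation as in Algorithm~\ref{alg1}, but a naive computation of the reachability-maximizer is not linear, since all-pairs reachability in a general directed graph is not a linear-time problem. I would handle this by first contracting $H$ into its strongly connected components; since truthfulness is constant on each SCC and the truthful SCCs form a down-closed set in the condensation DAG, any reachability-maximizer can be taken to be a sink SCC. I would then argue that a sink SCC whose reachability exceeds $b - i$ can be located by SCC decomposition plus a topological sweep of $H$, in total time linear in $|V_H| + |E_H|$.
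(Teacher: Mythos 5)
Your correctness argument coincides with the paper's proof of Theorem~\ref{thm:4}: the same pruning analysis (each removed pair contains a corrupt vertex, so at most $b-i$ corrupt nodes survive and $H$ is nonempty), the same propagation observation (in $H$ all reports are good, so a corrupt $v$ forces all of $R_H(v)$ to be corrupt, hence $|R_H(v)|\leq b-i$), the observation that the $2i$ deleted vertices form a $k^*$-reachability separator so $S_D(k^*)\leq 2i$, and the identical chain $b-i\leq m(D)/2-i\leq \tfrac{1}{2}\min_k(S_D(k)+k)-i\leq \tfrac{1}{2}(2i+k^*)-i=k^*/2<k^*$ via the upper bound of the directed 2-approximation lemma. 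So the identification guarantee is proved exactly as in the paper.

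The divergence is the runtime claim, which the paper dismisses in one sentence (``follows from the same analysis as in Theorem~\ref{thm:1}''), and which you rightly flag as the delicate point; however, your proposed repair does not work as stated. Determining the reachability index of the sink SCCs is not a topological sweep: ancestor sets of different sinks (and of one sink along different paths) overlap, so counts cannot be propagated through the condensation DAG without overcounting, and computing $|R_H(v)|$ even just for the sink components is in general a transitive-closure-type computation rather than $O(|V_H|+|E_H|)$. Your observations that truthfulness is constant on SCCs and that a reachability maximizer may be taken in a sink SCC are correct, but they only restrict the candidate set; they do not make the counting linear. (Also, the test ``reachability exceeds $b-i$'' presupposes knowledge of $b$, which the agency is not given; the maximizer criterion of Algorithm~\ref{alg2} avoids this.) So as written you establish the detection guarantee but not the linear-time claim; you would need either a genuinely linear-time way to certify one high-reachability vertex or to weaken the runtime to polynomial --- noting, in fairness, that the paper's own one-line justification does not engage with this subtlety either.
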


\begin{proof}[Proof of Theorem \ref{thm:4}]
Suppose the corrupt party has budget $b\leq m(D)/2$. Run Algorithm \ref{alg2}. Notice each time we remove an edge with bad report, at least one of the end point is a corrupt vertex. So we have removed at most $2b \leq m(D)\leq\lceil|V|/2\rceil$ nodes. Therefore, the graph $H$ is nonempty. Let $k^*\geq 1$ be the maximum reachability index in $H$. Since $b\leq m(D)/2$, and there are no bad reports in $H$, the reachability index of a bad node in graph $H$ is at most $m(D)/2-i\leq \min_k (S_{D}(k)+k)/2-i \leq (2i+k^*)/2-i = k^*/2 < k^*$. 

Then a vertex with reachability index $k^*$ must be found by Algorithm \ref{alg2}, and must be a truthful node. The linear runtime $O(|E_D|)$ follows from the same analysis as in the proof of Theorem~\ref{thm:1}.
\end{proof}




\section{Finding an Arbitrary Fraction of Good Nodes on a Graph} 
\label{sec:find-many-truthful}
Being able to detect one good node may seem limited, but in fact, the same arguments and construction can be adapted to show that approximating the critical number of bad nodes to prevent detection of any arbitrary $\delta$ fraction of good nodes is SSE-hard. In this section, we propose the definition of $g$-remainder $k$-vertex separator, a vertex separator notion related to identifying arbitrary number of good nodes, present a $2$-approximation result, and prove hardness of approximation with arguments similar to proof of Theorem \ref{thm:2} in Section~\ref{sec:hardness-proof}. 

We abuse notation and define $m(G,g)$ to be the minimal number of bad nodes needed to prevent the identification of $g$ nodes. 
\begin{definition}[$m(G,g)$]
We define $m(G,g)$ as the minimal number of bad nodes such that it is impossible to find $g$ good nodes in $G$. In particular, $m(G) = m(G,1)$. 
\end{definition}

\begin{definition}[\textbf{$g$-remainder $k$-vertex Separator}]
Consider the following separation property: after the removal of a vertex set $S$, the remaining graph $G_{V\backslash S}$ is a union of connected components, where connected components of size larger than $k$ sum up to size less than $g$. We call such a set $S$ a $g$-remainder $k$-vertex separator of $G$. 

For any integer $0<k,g<|V|$, we denote the minimal size of such a set as $S_G(k,g).$ In particular, a minimal $k$-vertex separator is a $1$-remainder $k$-vertex separator, i.e., $S_G(k) = S_G(k,1).$
\end{definition}

\begin{theorem}
Fix a graph $G$ and the number of good nodes to recover, $g$. Suppose that the corrupt party has a budget $b\leq m(G,g) / 2$. If $g<|V|-2b,$ then the central agency can identify $g$ truthful nodes, regardless of the strategy of the corrupt party, and without knowledge either of $m(G,g)$ or $b$. Furthermore, the central agency's algorithm runs in linear time. 
\label{thm:efficient_alg}
\end{theorem}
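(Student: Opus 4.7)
The plan is to mirror the strategy used to prove Theorem~\ref{thm:1}, replacing $k$-vertex separators with $g$-remainder $k$-vertex separators throughout. The first step is to establish the natural analog of Lemma~\ref{lemma_2_approx},
$$\tfrac{1}{2}\min_{k}\bigl(S_G(k,g)+k\bigr)\;\leq\;m(G,g)\;\leq\;\min_{k}\bigl(S_G(k,g)+k\bigr).$$
For the upper bound, with $k^{*}=\arg\min_{k}(S_G(k,g)+k)$, the corrupt party corrupts an optimal $g$-remainder $k^{*}$-vertex separator together with one remaining component of size at most $k^{*}$, and uses the reporting pattern from Lemma~\ref{lemma_2_approx}; since every remaining small component is indistinguishable from the genuinely corrupt one, only the large remaining components, whose total size is strictly less than $g$ by the separator definition, are provably truthful. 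For the lower bound, running the pair-removal phase of Algorithm~\ref{alg1} under an optimal corruption leaves a residual graph $H$ with at most $m(G,g)-i$ bad vertices after $i$ rounds; if the central agency cannot recover $g$ truthful nodes, then the components of $H$ of size greater than $m(G,g)-i$ must jointly contain fewer than $g$ vertices, so the $2i$ removed vertices form a $g$-remainder $(m(G,g)-i)$-separator, and averaging yields $m(G,g)=(m(G,g)-i)+2i/2\geq \tfrac{1}{2}\min_{k}(S_G(k,g)+k)$.

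For the algorithm itself, I would modify Algorithm~\ref{alg1} to retain the pair-removal phase but replace the final step by a greedy selection: sort the connected components of the residual graph $H$ by size in descending order and output the first $g$ vertices (breaking ties arbitrarily). The hypothesis $g<|V|-2b$ ensures $|H|=|V|-2i\geq|V|-2b>g$, so such a selection is well-defined. Correctness is then established by a case analysis on $i\in\{0,1,\dots,b\}$. If $i=b$, every bad vertex was removed during pair-removal, so all of $H$ is truthful and any $g$ output vertices are good. If $i<b$, I would show that the components of $H$ with size greater than $b-i$ collectively contain at least $g$ vertices; otherwise the $2i$ removed vertices would form a $g$-remainder $(b-i)$-separator, giving
$$2b\;\leq\;m(G,g)\;\leq\;\min_{k}\bigl(S_G(k,g)+k\bigr)\;\leq\;S_G(b-i,g)+(b-i)\;\leq\;2i+(b-i)\;=\;b+i,$$
which forces $b\leq i$ and contradicts $i<b$. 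Since a component of size greater than $b-i$ cannot fit within the residual budget, it is provably truthful, and the greedy largest-first rule, finishing within the prefix of such components, outputs only truthful vertices.

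The main subtlety, and the feature that distinguishes this proof from Theorem~\ref{thm:1}, is that the central agency does not know $b$ or $m(G,g)$, so its output rule cannot explicitly threshold at $b-i$. The resolution is that greedy largest-first selection is \emph{self-calibrating}: whenever the provably truthful (large) components collectively contain at least $g$ vertices, the greedy rule automatically restricts its output to those components, independently of where the unknown threshold actually lies. This is exactly what the $2$-approximation lemma combined with $g<|V|-2b$ guarantees in the case $i<b$, while the $i=b$ case is immediate. Linear-time implementation is identical to Theorem~\ref{thm:1}: the pair-removal phase runs in $O(|E|)$, the components of $H$ are computed in $O(|V|+|E|)$, and since each component size is at most $|V|$, a counting sort orders them in $O(|V|)$ and a single scan completes the output selection.
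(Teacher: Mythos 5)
Your proposal is correct and follows essentially the same route as the paper: run the pair-removal phase and greedily declare the largest components of the residual graph (the paper's Algorithm~\ref{alg3}), with correctness via the $g$-remainder $k$-vertex-separator characterization (the paper's Lemma~\ref{lemma_2_approx2}) and a contradiction showing the provably truthful large components must contain at least $g$ vertices. The only cosmetic difference is that you prove the $2$-approximation lemma up front and parametrize the contradiction by $i$ and $b$, whereas the paper argues inline in terms of $m(G,g)-\delta$ and states the lemma afterwards.
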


\begin{algorithm}[H]
Input: Undirected graph $G$
\begin{itemize}
\item If the reports on edge $(u,v)$ does not equal to $(u\in T, v \in T)$, remove both $u,v$ and any incident edges. Remove a pair of nodes in each round, until there are no bad reports left.
\item Suppose the previous step terminates in $i$ rounds. In the remaining graph $H$, rank the connected component from large to small by size. Declare the largest component as good and remove the declared component until we have declared $g$ nodes as good.
\end{itemize}
\caption{Finding $g$ truthful vertices on an undirected graph $G$}
\label{alg3}
\end{algorithm}

\begin{proof}[Proof of Theorem \ref{thm:efficient_alg}]
We claim that central agency can use Algorithm \ref{alg3}, and output at least $g$ good nodes if $b\leq m(G,g)/2$. Step 1 of Algorithm \ref{alg1} must terminate after removing fewer than $m(G,g)$ nodes, because each round has to remove at least one bad node, and there are only $m(G,g)/2$ bad nodes in total. Let the number of nodes removed be $m(G,g)-\delta$, so at least $m(G,g)/2 - \delta/2 \geq b-\delta/2$ are corrupt. Thus at most $\delta/2$ bad nodes remain in the graph $H$. 

Assume towards contradiction that only $y<g$ nodes output by Algorithm \ref{alg1} are good. This means that the $m(G,g)-\delta$ removed nodes separate the graph $G$ into connected components where all components with size larger than $\delta / 2$ sum to fewer than $g$. Then $m(G,g)-\delta = m(G,y)$ for $y < g$, contradicting the fact that $m(G,g)$ is the minimum budget needed to prevent identification of $g$ nodes. 









\end{proof}

In fact, just like in Section~\ref{sec:main-results}, Algorithm \ref{alg3} additionally gives us a characterization of $m(G,g)$ in terms of the size of the smallest $g$-remainder $k$-vertex separator of a graph, for an appropriately chosen value of $k$.

\begin{lemma}[2-Approximation by Vertex Separation]
The minimal sum of $g$-remainder $k$-vertex separator and $k$, $\min_k{(S_G(k,g)+k)}$, bounds the critical number of bad nodes $m(G,g)$ up to a factor of 2. i.e., 
$$\frac{1}{2}\min_{k}{S_G(k,g)+k} \leq m(G,g)\leq \min_k {S_G(k,g)+k}.$$
\label{lemma_2_approx2}
\end{lemma}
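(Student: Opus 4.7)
The plan is to replay the proof of Lemma~\ref{lemma_2_approx} with $k$-vertex separators replaced by $g$-remainder $k$-vertex separators; the two inequalities come from a corrupt-party construction (upper bound) and an analysis of Algorithm~\ref{alg3} (lower bound), respectively.

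For the upper bound $m(G,g)\leq \min_k (S_G(k,g)+k)$, let $k^\ast=\arg\min_k (S_G(k,g)+k)$ and let $C$ be a $g$-remainder $k^\ast$-vertex separator of size $S_G(k^\ast,g)$. After removing $C$ the graph splits into components, with the union of all components of size $>k^\ast$ containing fewer than $g$ vertices; call these the ``large'' components and the rest the ``small'' components (each of size $\leq k^\ast$). The adversary corrupts $C$ together with one arbitrary small component $C_0$, using at most $S_G(k^\ast,g)+k^\ast$ nodes. Following the template of Lemma~\ref{lemma_2_approx}, nodes in $C_0$ report same-component neighbors as good and separator neighbors as bad, while separator nodes report arbitrarily. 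By symmetry, for any small component $C_i$ the configuration ``$C$ and $C_i$ bad, everything else good'' is consistent with the reports and stays within budget (since $|C_i|\leq k^\ast$), so no node in any small component can be confidently declared truthful. The large components can be declared truthful but together have size strictly less than $g$, so the central agency cannot pin down $g$ truthful nodes.

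For the lower bound $\tfrac12\min_k (S_G(k,g)+k)\leq m(G,g)$, suppose that the adversary has budget $b=m(G,g)$ and plays optimally, so no $g$ truthful nodes can be identified. Run Algorithm~\ref{alg3}; suppose Step~1 terminates after $i$ rounds, removing $2i$ vertices of which at least $i$ are corrupt. In the remaining graph $H$, every edge carries a good report, so each connected component has a uniform type, and any component of size exceeding $b-i$ must consist entirely of truthful vertices (else it would exhaust the adversary's remaining budget). If the components of size greater than $b-i$ summed to at least $g$ vertices, Algorithm~\ref{alg3} would successfully output $g$ truthful nodes, contradicting optimality of the adversary; hence they sum to less than $g$, which by definition makes the $2i$ removed nodes a $g$-remainder $(b-i)$-vertex separator. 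Thus $S_G(b-i,g)\leq 2i$, and we conclude
\[
m(G,g)=b=(b-i)+\tfrac{2i}{2}\geq \min_k \left(k+\tfrac12 S_G(k,g)\right)\geq \tfrac12 \min_k (S_G(k,g)+k).
\]

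The main obstacle is setting up the adversarial strategy for the upper bound cleanly: unlike the $g=1$ case, the remainder after removing the separator contains both ``small'' and ``large'' components, and one must argue simultaneously that each small component is individually covered by a consistent alternate configuration (forcing ambiguity on all small components) while the truthful large components are too few in aggregate (fewer than $g$ vertices) to give the central agency enough confident good nodes. The lower bound essentially repeats the $g=1$ argument once we recognize that Algorithm~\ref{alg3}'s failure is equivalent to the existence of a $g$-remainder separator of the appropriate parameter.
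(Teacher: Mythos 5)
Your proof is correct and follows essentially the same route as the paper's: the upper bound corrupts a minimizing $g$-remainder $k^\ast$-vertex separator plus one small component (with the same report pattern as in Lemma~\ref{lemma_2_approx}), and the lower bound runs Algorithm~\ref{alg3} and observes that its failure forces the $2i$ removed nodes to be a $g$-remainder $(b-i)$-vertex separator, giving $S_G(b-i,g)\leq 2i$ and the stated chain of inequalities. Your write-up is in fact somewhat more explicit than the paper's about why the small components remain ambiguous while the large components contribute fewer than $g$ confirmable nodes.
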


\begin{proof}[Proof of Lemma \ref{lemma_2_approx2}]
The upper bound follows simply. Let $k^* = \arg \min_k S_G(k,g)+k$. Given a budget $b=\min_kS_G(k,g)+k$, the bad party can remove a set of size $S_G(k^*,g)$ and separate the graph into connected components of size at most $k^*$, except for fewer than $g$ nodes. Control one of the connected components of size at most $k^*$, and construct the reports similarly as in Lemma \ref{lemma_2_approx}. Then the central agency can only identify fewer than $g$ good nodes. 

For the lower bound, suppose there are $b=m(G,g)$ bad nodes distributed optimally on $G$ and thus it's impossible to find $g$ good nodes by definition. Run Algorithm \ref{alg3}. Suppose the first step terminates in $i$ rounds. After the removal of $2i$ nodes, the graph must be separated into connected components smaller than $b-i$, except for fewer than $g$ nodes. Then $2i \geq S_G(b-i,g)$. Therefore, 

$$ \frac{1}{2} \min_k \left( S_G(k,g)+ k \right) \leq \min_k \left( \frac{S_G(k,g)}{2} + k \right) \leq \frac{1}{2}S_G(b-i,g)+ (b-i) \leq \frac{2i}{2} +b-i = m(G,g) $$
\end{proof}

Now using the characterization given by $g$-remainder $k$-vertex separator, we are ready to prove that it is SSE-hard to approximate the budget needed to prevent any arbitrary number of good nodes, i.e., $m(G,g)$ for any $g<|V|/3$. 

\begin{theorem}
For every $\beta>1$ and every $0<\delta<1$, there is a constant $\epsilon>0$ such that the following is true. Given a graph $G=(V,E)$, it is SSE-hard to distinguish between the case where $m(G,\delta |V|)\leq \epsilon \cdot |V|$ and $m(G,\delta|V|)\geq \beta \cdot \epsilon \cdot |V|.$ Or in other words, the problem of approximating the critical number of corrupt nodes such that it is impossible to find $\delta |V|$ good nodes within any constant factor is SSE-hard.
\label{thm:sse_hard}
\end{theorem}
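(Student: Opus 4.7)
The plan is to reuse verbatim the bipartite gadget construction used in the proof of Theorem~\ref{thm:2}: given an SSE instance $G=(V,E)$ with parameters $\epsilon$ and $q = 1/\epsilon$, form the auxiliary graph $G' = (V', E')$ with $|V'| = 2|E|$ by replacing each $v \in V$ with $r = d/2$ vertex-copies and turning every edge of $G$ into an edge-vertex. The goal is to establish a gap of the form ``$m(G', \delta |V'|) \leq 2\epsilon |V'|$ in the \textbf{Yes} case'' versus ``$m(G', \delta |V'|) \geq \Omega(\epsilon^{0.51})\cdot |V'|$ in the \textbf{No} case'', which then converts to SSE-hardness of $\beta$-approximation in exactly the same way as the final step of the proof of Theorem~\ref{thm:2}. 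For the \textbf{Yes} direction I would just invoke Lemma~\ref{yeslemma} as a black box, observing that its proof actually yields something stronger than is stated: the $q$ assignments obtained by permuting which block $S_j^*$ plays the role of ``the'' corrupt block (each of size at most $4\epsilon|E|$, together with $\mathcal{N}(S_j^*)$ and $E^*$) are \emph{all} simultaneously consistent with the reports, and every vertex of $V'$ is corrupt in at least one of them. Hence no single vertex can be confidently declared truthful, let alone $\delta|V'|$ of them.

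For the \textbf{No} direction, by Lemma~\ref{lemma_2_approx2} it suffices to show that every $(\delta|V'|)$-remainder $k$-vertex separator $C$ of $G'$ satisfies $|C| + k \geq \Omega(\epsilon^{0.51})|V'|$. I will suppose for contradiction that some $(C, k)$ with $|C|+k \leq o(\epsilon^{0.51})|V'|$ violates this and derive absurdity. Define $V_C := \{v \in V : \text{all } r \text{ copies of } v \text{ lie in } C\}$ and $E_C := \{e \in E : \text{its edge-vertex lies in } C\}$; these satisfy $|V_C| \leq |C|/r$ and $|E_C| \leq |C|$. Form the reduced graph $\tilde G$ on vertex set $V \setminus V_C$ with edge set $E \setminus E_C$. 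The key claim is that $\tilde G$ has a single giant component $C_1$ of size $\geq (1-\epsilon')|V|$ for any target constant $\epsilon' > 0$, provided $\epsilon$ is chosen small enough in terms of $\epsilon'$. To prove this I would invoke the strengthened version of Remark~\ref{rmk2} (the form recorded in its footnote, valid down to any constant $\epsilon'$): for any $S \subseteq V\setminus V_C$ with $|S|/|V| \in [\epsilon', 1-\epsilon']$, one has $|E_G(S, V\setminus S)| \geq \Omega(\sqrt{\epsilon})|E|$, and at most $d|V_C| + |E_C| = o(\epsilon^{0.51})|E|$ of these edges can be lost upon passing to $\tilde G$; so $\tilde G$ has strictly positive edge-boundary on every such balanced subset. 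Since any union of $\tilde G$-components has zero edge-boundary, no such union can have relative size in $[\epsilon', 1-\epsilon']$, and a standard greedy argument then forces exactly one component to occupy more than a $(1-\epsilon')$ fraction of $V$.

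The final step is to lift $C_1$ back to $G'$. The non-$C$ copies of vertices in $C_1$ (at least $r|C_1|-|C|$ of them) together with the edge-vertices of all edges in $E\setminus E_C$ having at least one endpoint in $C_1$ (at least $|E| - O(\epsilon')|E| - |E_C|$ of them, using $|V\setminus C_1|\leq |V_C| + \epsilon'|V|$) all lie in a single connected component of $G'\setminus C$. The reason is that each $v \in C_1$ is non-isolated in $\tilde G$, so has some surviving edge-vertex $e^k \notin C$ incident to it, and $e^k$ glues all non-$C$ copies of $v$ together in $G'\setminus C$; any $\tilde G$-edge between two $C_1$-vertices lifts to a similar gluing path between their respective copy-clusters. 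Summing, this single $G'\setminus C$-component has size at least $(1 - O(\epsilon'))|V'|$. Choosing $\epsilon'$ (and hence $\epsilon$) so that $(1 - O(\epsilon')) > \delta$ — possible for every $\delta < 1$ — makes this component simultaneously of size $> k$ and of size $> \delta|V'|$ \emph{on its own}, contradicting the defining property of a $(\delta|V'|)$-remainder $k$-vertex separator (that all components of size $>k$ together have size less than $\delta|V'|$). I expect the main obstacle to be the bookkeeping in this last lifting step — verifying that $\tilde G$-connectivity indeed transfers to $G'\setminus C$-connectivity of all the lifted vertex-copies and incident edge-vertices despite the overlap between different copies of the same $G$-vertex — and it is precisely the strengthened Remark~\ref{rmk2}, valid down to \emph{any} constant relative size, that lets the argument accommodate $\delta$ arbitrarily close to $1$.
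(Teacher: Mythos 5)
Your proposal is correct, but in the \textbf{No} direction it takes a genuinely different route from the paper. The paper splits into two regimes: for $\delta<1/2$ it mimics Lemma~\ref{nolemma} directly, merging the small components $T_i'$ and the large components $A_j'$ (whose total is $<g$) into two sets each containing a constant fraction of $\tilde V$ and then invoking expansion to force $|C|\geq \Omega(\sqrt{\epsilon}|E|)$; this merging genuinely breaks when $\delta\geq 1/2$ (the large components may swallow essentially all of $\tilde V$), so for $1/2\leq\delta<1$ the paper switches to a different reduction, appending a clique of size $\frac{\delta}{1-\delta}|V|$ and proving $m(G',\delta|V'|+1)=m(G)$ (Claims~\ref{claim:3} and~\ref{claim:4}), thereby reducing to Theorem~\ref{thm:2}. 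You instead handle all $\delta\in(0,1)$ uniformly on the original gadget: using the strengthened form of Remark~\ref{rmk2} you show that after deleting any separator $C$ of size $O(\epsilon^{0.51}|E|)$, the projected graph $\tilde G$ has a single component covering a $(1-\epsilon')$ fraction of $V$, and lifting it (vertex-copies plus incident edge-vertices) gives one component of $G'\setminus C$ of size $(1-O(\epsilon'))|V'|>\delta|V'|$, directly contradicting the definition of a $(\delta|V'|)$-remainder $k$-vertex separator via Lemma~\ref{lemma_2_approx2}. Your giant-component argument is in fact a stronger structural statement than the paper proves about the gadget, and it buys uniformity in $\delta$ (no clique gadget, no Claims~\ref{claim:3}--\ref{claim:4}); its cost is the extra lifting bookkeeping, which you only sketch (the ``standard greedy'' step and the transfer of $\tilde G$-connectivity to $G'\setminus C$ through shared vertex-copies), but those details do check out: every $v$ in the giant component has a surviving incident edge-vertex gluing all its non-$C$ copies, the lost edges number at most $d|V_C|+|E_C|=o(\sqrt{\epsilon})|E|$, and the quantifier order (choose $\epsilon'$ from $\delta$, then $\epsilon$ from $\epsilon'$ and $\beta$) is compatible with Conjecture~\ref{SSE}. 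The \textbf{Yes} direction is the same as the paper's (Lemma~\ref{yeslemma} plus monotonicity of $m(G',g)$ in $g$), and the final conversion to SSE-hardness is identical.
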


We first prove Theorem \ref{thm:sse_hard} for $0<\delta<1/3$. The proof in this regime follows similar constructions and arguments as in the proof of Theorem \ref{thm:2}. Note that the proof extends naturally for any $0<\delta<1/2$. This is effectively because the range for $\mu$ in Remark \ref{rmk2} can be made to $[\epsilon',1/2]$, for any constant $\epsilon'>0$. Further explanation is provided in proof for Lemma \ref{lemma:no}. 

Firstly, we construct $G'$ based on $G$ as in Section~\ref{sec:hardness-proof}. Lemma \ref{yeslemma} immediately implies that:

\begin{lemma}
Suppose $q=1/\epsilon$, and $G$ can be partitioned into $q$ equi-sized sets $S_1,\cdots,S_q$ such that $\Phi_G(S_i)\leq 2\epsilon$ for every $1\leq i\leq q.$ The bad party can prevent the identification of one good node, and thus $\delta |V'|$ good nodes, on the auxiliary graph $G'$ with $O(\epsilon |E|)=O(\epsilon|V'|)$ nodes.
\label{lemma:yes}
\end{lemma}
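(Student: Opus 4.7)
The plan is to observe that Lemma~\ref{lemma:yes} is essentially Lemma~\ref{yeslemma} plus a trivial monotonicity, exactly as the transitional sentence ``Lemma~\ref{yeslemma} immediately implies that'' already suggests. First, I would invoke Lemma~\ref{yeslemma} directly: the hypotheses match word-for-word (the same partition of $G$ into $q=1/\epsilon$ equi-sized parts with $\Phi_G(S_i)\le 2\epsilon$, and the same auxiliary graph $G'$ from Section~\ref{sec:hardness-proof}). That lemma already produces an explicit corruption strategy on $G'$ using at most $4\epsilon|E| = 2\epsilon|V'| = O(\epsilon|V'|)$ corrupt nodes under which the central agency cannot identify even one truthful vertex of $G'$. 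Nothing new needs to be computed; I would simply cite the strategy (corrupt the ``edge'' vertices $E^*$ not contained in any $S_i$, together with one block $S_1^*$ and $\mathcal{N}(S_1^*)\setminus E^*$, and set the reports exactly as in its proof).

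Next, I would bridge from ``one truthful node'' to ``$\delta|V'|$ truthful nodes'' by monotonicity. By Definition~\ref{def:impossible-good}, the inability to identify a single truthful node means that for every vertex $v\in V'$ there exists an identity assignment consistent with the reports and the budget in which $v$ is corrupt. Consequently, no candidate set $T$ with $|T|=\delta|V'|$ that the central agency might declare entirely truthful can be certified as such, since some $v\in T$ can always be reassigned to corrupt in a consistent configuration. This promotes the conclusion of Lemma~\ref{yeslemma} from ``one truthful node'' directly to the desired ``$\delta|V'|$ truthful nodes''.

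The only thing to verify is that the corrupt budget $O(\epsilon|V'|)$ is genuinely below $|V'|/2$ for small enough $\epsilon$, so that the monotonicity step is not vacuous; this is already noted inside the proof of Lemma~\ref{yeslemma} (the bound $q>4$ suffices). I do not anticipate any real obstacle for this lemma, because all the combinatorial substance sits in Lemma~\ref{yeslemma} and Remark~\ref{rmk1}. The genuine work for SSE-hardness of $m(G,\delta|V|)$ will live in the forthcoming soundness (NO-case) lemma, where the window $\mu\in[1/10,1/2]$ of Remark~\ref{rmk2} must be widened (via the footnote to Remark~\ref{rmk2} allowing $\mu\in[\epsilon',1-\epsilon']$) in order to handle an arbitrary constant $\delta\in(0,1)$ rather than only the single-good-node regime.
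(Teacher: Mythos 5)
Your proposal matches the paper exactly: the paper gives no separate proof, stating only that Lemma~\ref{yeslemma} immediately implies this lemma, which is precisely your invocation of that lemma's strategy (using $4\epsilon|E| = 2\epsilon|V'|$ corrupt nodes) plus the trivial monotonicity that preventing identification of one truthful node also prevents identification of $\delta|V'|$ truthful nodes. Your argument is correct and complete.
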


We reprove the analogous lemma to Lemma \ref{nolemma}.
\begin{lemma}
Let $G = (V, E)$ be an undirected $d$-regular graph with the property that for every $|V |/10 \leq
|S| \leq 9|V |/10$ we have $|E(S, V \ S)| \geq \Omega(\sqrt{\epsilon}|E|)$. If bad party controls $O(\epsilon^{0.51}|E|)=O(\epsilon^{0.51}|V'|) < 1/2|V'|$ nodes on the auxiliary graph $G'$ constructed from $G$, we can always find $\delta|V'|$ truthful nodes on $G'$, for any $\delta<1/3$. 
\label{lemma:no}
\end{lemma}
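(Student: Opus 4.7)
The plan is to adapt the proof of Lemma \ref{nolemma}, replacing Claim \ref{sep} by the $g$-remainder $k$-vertex separator characterization in Lemma \ref{lemma_2_approx2}. Suppose for contradiction that the corrupt party controls a set of size $O(\epsilon^{0.51}|E|)$ in $G'$ but it is impossible to identify $\delta|V'|$ truthful nodes. Then $m(G',\delta|V'|)\leq O(\epsilon^{0.51}|E|)$, so by Lemma \ref{lemma_2_approx2} there exist $k^*\leq O(\epsilon^{0.51}|E|)$ and a $\delta|V'|$-remainder $k^*$-vertex separator $C\subseteq V'$ with $|C|\leq O(\epsilon^{0.51}|E|)$.

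The components of $G'\setminus C$ of size greater than $k^*$ together contain fewer than $\delta|V'|$ vertices, so the union of the ``small'' components (of size at most $k^*$) has total size at least $(1-\delta-o(1))|V'|$. For $\delta<1/3$ this is at least $(2/3-o(1))|V'|$, and since $|\tilde V|=|\tilde E|=|V'|/2$, the intersection of these small components with $\tilde V$ has size at least $(1/3-o(1))|\tilde V|$. Because each small component has size at most $k^*=o(|V'|)$, we can greedily group the small components into two disjoint sets $T_1',T_2'\subseteq V'\setminus C$ with $|T_i'\cap\tilde V|\geq|\tilde V|/10$ for $i=1,2$, and by construction there are no $G'$-edges between $T_1'$ and $T_2'$.

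The remainder mirrors the proof of Lemma \ref{nolemma}. Define $T_1,T_2\subseteq V$ as the vertices with at least one copy in $T_1',T_2'$ respectively and $S\subseteq V$ as those whose every copy lies in $C$; then $|T_i|\geq|V|/10$, $|S|\leq|C|/r=O(\epsilon^{0.51}|V|)$, and $T_1\cup T_2\cup S=V$. Partition $T_1\cup T_2$ into $S_1\subseteq T_1$ and $S_2\subseteq T_2$ with $|S_j|\in[|V|/100,\,99|V|/100]$ for $j=1,2$, which is feasible because $|S_1|$ may be chosen anywhere in $[|T_1\setminus T_2|,|T_1|]$ and this interval intersects $[|V|/100,\,99|V|/100]$ given the size bounds on $T_1,T_2$. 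Invoking the footnote-strengthened form of Remark \ref{rmk2} (valid for any $\mu$ bounded away from $0$ and $1$) gives $|E(S_1,V\setminus S_1)|\geq\Omega(\sqrt\epsilon|E|)$; subtracting $|E(S_1,S)|\leq d|S|\leq O(\epsilon^{0.51}|E|)$ yields $|E(S_1,S_2)|\geq\Omega(\sqrt\epsilon|E|)$ for small enough $\epsilon$. Since every such edge of $G$ becomes a vertex of $G'$ that must lie in $C$, this contradicts $|C|\leq O(\epsilon^{0.51}|E|)$.

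The novel step compared to Lemma \ref{nolemma} is accommodating the up to $\delta|V'|$ vertices that may live in large components outside $C$. The constraint $\delta<1/3$ is exactly what ensures that the small-component mass exceeds $|\tilde E|=|V'|/2$ by a positive constant, so that the small components still cover a constant fraction of $|\tilde V|$; from there the balanced $V$-partition and expansion-based contradiction proceed as before. The main obstacle is thus identifying this threshold $1/3$ and verifying that it suffices to carry out the projection and balanced-partition steps.
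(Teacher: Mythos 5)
Your overall route is the paper's route (reduce to the $g$-remainder $k$-vertex separator via Lemma \ref{lemma_2_approx2}, project to $V$, take a balanced partition, and contradict the expansion of $G$), but there is a genuine gap in how you treat the large components. You place only the \emph{small} components into $T_1'$ and $T_2'$ and discard the components $A_j'$ of size greater than $k^*$, yet you then assert $T_1\cup T_2\cup S=V$ with $S$ defined as the vertices all of whose copies lie in $C$. This is false in your construction: a vertex $v\in V$ all of whose $r$ copies lie in $C$ or in large components (with at least one copy in a large component) belongs to none of $T_1,T_2,S$. Such ``leftover'' vertices can number up to roughly $(|C|+|A'\cap \tilde{V}|)/r\approx 2\delta|V|$, a constant fraction of $V$. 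Consequently the identity you implicitly use, $|E(S_1,S_2)|=|E(S_1,V\setminus S_1)|-|E(S_1,S)|$, fails: $E(S_1,V\setminus S_1)$ may be dominated by edges into the leftover set, and the lower bound $|E(S_1,S_2)|\geq\Omega(\sqrt{\epsilon}|E|)$ does not follow, so the contradiction with $|C|\leq O(\epsilon^{0.51}|E|)$ is not established. (Redefining $S$ to absorb the leftover vertices does not help either, since then $|S|\leq |C|/r$ and hence the bound $|E(S_1,S)|\leq O(\epsilon^{0.51}|E|)$ would no longer hold.)

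The paper avoids this by merging \emph{all} components, the $A_j'$ as well as the $T_i'$, into the two sides $V_1',V_2'$, so that $V_1'\cup V_2'\cup C=V'$ and hence the projected sets satisfy $V_1\cup V_2\cup S=V$ with $|S|\leq|C|/r$; then the subtraction argument is valid. This is also where $\delta<1/3$ is actually used: it gives $|A'|<g\leq |V'|/3$, so $|A'\cap\tilde{V}|\leq 2|\tilde{V}|/3$, which guarantees that no single component is too heavy to allow both sides to retain $\tilde{V}$-mass at least $|\tilde{V}|/10$ during the merge. Your stated reason for the threshold (that the small components alone retain a constant fraction of $\tilde{V}$) is not the operative one, and your proof becomes correct essentially only after you reinstate the large components in the two-side merge as in the paper.
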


\begin{proof}[Proof of Lemma \ref{lemma:no}]
Let $g=\delta |V'|$. 
Assume towards contradiction that the bad party controls $O(\epsilon^{0.51}|E|)$ vertices in $G',$ and we cannot identify $g$ truthful nodes. 

\begin{claim}If the bad party controls $O(\epsilon^{0.51}|E|)$ vertices of graph $G',$ and we can't identify $g$ truthful node, then there exists a set $C$ of size $O(\epsilon^{0.51}|E|)$ and separates $V'\backslash C$ into sets $\{T'_i\}_{i=1,\cdots,\ell}$, each of size $|T'_i|\leq O(\epsilon^{0.51}|E|)$, and sets $\{A'_j\}_{j=1,\cdots,K}$, each of size $|A'_j|>\Omega(\epsilon^{0.51}|E|)$, and $|\cup_j^K A'_j|< g$.
\label{sep2}
\end{claim}

\begin{proof}[Proof of Claim \ref{sep2}]
Since the corrupt party can control $G'$ with $O(\epsilon^{0.51}|E|)$ vertices, 
$m(G',g) \leq O(\epsilon^{0.51}|E|)$. By Lemma \ref{lemma_2_approx2} $\min_k S_{G'}(k,g)+k \leq 2 m(G',g) \leq O(\epsilon^{0.51}|E|)$. Let $k^* = \arg \min_k S_{G'}(k,g) + k$. Then $k^*\leq O(\epsilon^{0.51}|E|)$, $S_{G'}(k^*,g)\leq O(\epsilon^{0.51}|E|)$. By definition of $ S_{G'}(k^*,g)$, there exists a set of size $S_G(k^*)$ after whose removal separates the remainder of the graph $G$ to connected components of size at most $k^*$ except for fewer than $g$ nodes. Thus components of size larger than $\Omega(\epsilon^{0.51}|E|)$ contain fewer than $g$ nodes.
\end{proof}

Let $T'= \cup_{i=1}^\ell T'_i, A'= \cup_{j=1}^K A'_j.$ Since $|C|=O(\epsilon^{0.51}|E|)=O(\epsilon^{0.51}|\tilde{V}|)$, and $C\cup T'\cup A'=V'$, for small enough $\epsilon$, $|(T'\cup A') \cap \tilde{V}|\geq 9|\tilde{V}|/10.$ From the assumption that we can't identify $g$ truthful nodes, $|A'| < g \leq |V'|/3$. Otherwise, we can claim the entire $A'$ as good and identify $g$ truthful nodes. Thus $|A'\cap \tilde{V}|\leq |V'|/3 \leq 2/3|\tilde{V}|$.\footnote{If we use the fact that $|A'|<g\leq (|V'|-\epsilon'|V|)/2$, for some constant $\epsilon'$, then $|A'\cap \tilde{V}|\leq (|V'|-\epsilon'|V'|)/2\leq (1-\epsilon')|\tilde{V}|$. We can merge $\{\{A_j\}, \{T_i\}\}$ to two sets $V'_1$, $V'_2$ such that $|V'_1\cap \tilde{V}|,|V'_2\cap \tilde{V}| \geq \epsilon'|\tilde{V}|.$ The rest of the proof still goes through.} 

Additionally, use the fact that
$|T'_i\cap \tilde{V}|<|\tilde{V}|/10$ for every $i$, with sufficiently small $\epsilon,$ we can merge various sets in $\{\{A_j\}_{j=1,\cdots,K},\{T_i\}_{i=1,\cdots, \ell}\}$ and get two sets $V_1'$ and $V_2',$ such that $|V'_1\cap \tilde{V}|,|V'_2\cap \tilde{V}|\geq |\tilde{V}|/10$, and $V_1'$ and $V_2'$ are separated by $C$. 

Now, let $V_1\subseteq V$ (resp. $V_2\subseteq V$) be the set of vertices $v\in V$ such that some copy of $v$ appears in $V_1'$ (resp. $V_2'$). Let $S\subseteq V$ be the set of vertices $v\in V$ such that all $r$ copies of $v$ appears in $C$. Since $|V_1'\cap \tilde{V}|,|V_2'\cup \tilde{V}|\geq |\tilde{V}|/10=r|V|/10,$ both $|V_1|,|V_2|\geq |V|/10.$ Furthermore, we observe that $V_1\cup V_2\cup S = V$, which follows from $V_1'\cup V_2'\cup C =V'.$
Now we can lower bound $|V_1\cup V_2|$ as follows. 

\begin{equation*}
|V_1\cup V_2| = |V\backslash S| \geq |V|-|C|/r \geq |V| - c\epsilon^{0.51}|E|/r = |V|-c\epsilon^{0.51}|V| \\
\end{equation*}
The first equality again follows from the fact that $V_1\cup V_2\cup S= V$, and that $V_1\cup V_2$ is disjoint from $S$, and the second inequality follows by definition of $S$.

Since $V_1\cup V_2$ is sufficiently large, we can find a balanced partition of $V_1\cup V_2$ into sets $S_1\subseteq V_1$, $S_2\subseteq V_2$, $S_1\cap S_2=\emptyset, S_1\cup S_2 = V_1\cup V_2$, $|V|/10 \leq
|S_1|,|S_2| \leq 9|V|/10$. From the property of $G$ that $E(S,V\setminus S)\geq \Omega(\sqrt{\epsilon}|E|)$ in Lemma \ref{nolemma} and the fact that $G$ is $d$-regular, we know that 

$$E(S_1,S_2) = E(S_1,V \setminus S_1)-E(S_1,S)\geq \alpha \sqrt{\epsilon}|E|-d(\epsilon^{0.51}|E|/r) = \alpha \sqrt{\epsilon}|E|-2\epsilon^{0.51}|E|=\Omega(\sqrt{\epsilon}|E|),$$
for some constant $\alpha$. In the first equality, we use the fact that $S_1 \cup S_2 \cup S = V$, and $S_1,S_2,S$ are disjoint. Thus $E(S_1, V\backslash S_1) = E(S_1, S_2\cup S) = E(S_1,S_2)+E(S_1,S).$

Note that since $S_1\subseteq V_1$ and $S_2\subseteq V_2$, and $T_1'$ and $T_2'$ do not have edges between them in $G'$, the edges $E(S_1,S_2)$ all have to land as "edge vertices" in $C$. Formally, $E(S_1,S_2)\subseteq \tilde{E}\cap C$. In other words, for any $u\in S_1$, and $v \in S_2,$ if $(u,v)\in E$, then the vertex $(u,v)\in V'$ has to be included in the set $C$, thus $|C|\geq \Omega(\sqrt{\epsilon}|E|)$.

This contradicts the fact that there are only $O(\epsilon^{0.51}|E|)$ vertices in $C$. 
\end{proof}

Using Lemma \ref{lemma:yes} and Lemma \ref{lemma:no}, we can again obtain Theorem \ref{thm:sse_hard} for $0<\delta<1/2$, with the same argument for the proof of Theorem \ref{thm:2} in Section \ref{sec:hardness-proof}. 

When $1/2\leq\delta<1$, we construct an auxiliary graph in the following way. Take as input any graph $G=(V,E)$. Let $h=\delta /(1-\delta)|V|$, construct $G' = G\cup h$-clique. Note $h=\delta|V'|.$ Then, we claim that the critical number of bad nodes such that it is impossible to detect $\delta|V'|+1$ good nodes on $G'$ is the same as the critical number of bad nodes such that it is impossible to find one good node on $G$.

\begin{claim}
Given any graph $G$, $1/2\leq \delta <1$ and $G'$ as constructed,
$$m(G',\delta |V'|+1) = m(G).$$
\label{claim:3}
\end{claim}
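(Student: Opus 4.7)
The plan is to prove the two inequalities $m(G', \delta|V'|+1) \leq m(G)$ and $m(G', \delta|V'|+1) \geq m(G)$ separately, exploiting the size gap $h = \delta|V'| \geq |V|$ (which holds since $\delta \geq 1/2$) to decouple the clique and $G$ parts of any consistent alternative configuration. Throughout, any bad budget $b \leq m(G) \leq |V|/2 \leq h/2$ is strictly below $h/2$, and this is the quantitative lever used repeatedly.

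For the upper bound I would have the corrupt party use the optimal $m(G)$-budget strategy on $G$ (preventing the identification of even one truthful node in $G$) and leave the $h$-clique entirely uncorrupted. Since the clique has only $h$ vertices, any candidate set $S \subseteq V'$ of size $h+1$ must contain at least one vertex $u$ of $G$; the definition of $m(G)$ then supplies an alternative $G$-configuration, consistent with the $G$-reports and using at most $m(G)$ bad nodes, in which $u$ is corrupt. Pairing this with the all-good clique (which is consistent with the clique's all-good reports) yields an alternative $G'$-configuration of the same budget in which $u$ is bad, so $S$ is not fully identifiable as good, and hence $m(G',\delta|V'|+1)\leq m(G)$.

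For the lower bound I would show that for every budget $b < m(G)$ and every corrupt strategy that places $k$ bad vertices in $G$ and $b-k$ in the clique, the central agency still identifies at least $h+1$ truthful vertices. The first step is that every consistent alternative assigns the clique the same partition as the real one: if the alternative-good and real-good clique sets intersect, a truthful node in the intersection truthfully reports on every other clique member, forcing the two clique configurations to coincide; otherwise the alternative good clique lies inside the real bad clique, making the alternative bad clique have size at least $h-(b-k)>b$, which violates the total budget. Thus exactly $h-(b-k)$ clique vertices are identifiable as good, and any alternative on $G$ uses at most $k$ bad vertices, so the $G$ part contributes at least $g^*_G(k):=\max\{g : m(G,g)>k\}$ identifiable good vertices. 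The total identifiable is therefore at least $g^*_G(k)+k+h-b$, so the lower bound reduces to the key inequality $g^*_G(k)+k \geq m(G)$ for every $k\geq 0$.

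The hard part is this key inequality, which I would obtain from the auxiliary monotonicity $m(G,g) \leq m(G,g+1)+1$ for every $g\geq 1$. To prove the monotonicity, take any optimal strategy witnessing $m(G,g+1)$ with identifiable-good set $S$ of size at most $g$; if $S$ is empty the bound is immediate, and otherwise pick $v\in S$ and augment the strategy by flipping $v$ to bad while keeping its outgoing reports equal to the old truthful ones. For any vertex $u$ un-identifiable under the old strategy, let $C_u$ be an old alternative certifying $u$ bad, and form $C_u^+$ from $C_u$ by additionally marking $v$ as bad; then $C_u^+$ uses at most $m(G,g+1)+1$ bad nodes and is consistent with the new reports (the only reports that change are the truthful reports of $v$'s neighbors about $v$, which under $C_u^+$ correctly call $v$ bad), so $u$ remains un-identifiable in the augmented strategy. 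Hence the new identifiable set lies in $S\setminus\{v\}$, has size at most $g-1$, and the augmented strategy witnesses $m(G,g)\leq m(G,g+1)+1$. This monotonicity makes $F(g):=m(G,g)+g$ non-decreasing, so $F(g)\geq F(1)=m(G)+1$; applying the duality $g^*_G(k)=g_0 \Leftrightarrow m(G,g_0+1)\leq k<m(G,g_0)$ gives $g^*_G(k)+k \geq g_0+m(G,g_0+1) = F(g_0+1)-1 \geq m(G)$ (with the boundary case $g^*_G(k)=0$ giving $k\geq m(G)$ directly), closing the argument.
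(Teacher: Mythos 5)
Your proposal is correct and, at its core, follows the same route as the paper: the same corrupt strategy gives the upper bound, and the lower bound ultimately rests on the inequality $m(G)\le m(G,g)+g-1$, which is exactly the paper's Claim~\ref{claim:4} --- your step-wise monotonicity $m(G,g)\le m(G,g+1)+1$ telescopes to it. The differences are in execution. For the key lemma, the paper flips the entire detected-good set $S$ to bad at once and argues the reports need not change at all, because every neighbor of $S$ outside $S$ is reported bad by $S$ and hence is already corrupt in every consistent configuration; you instead flip one detected vertex $v$ at a time and let the truthful neighbors' reports about $v$ change. Your consistency check of $C_u^+$ is under-justified at exactly one point: besides the changed reports, you must rule out a problematic \emph{unchanged} report, namely a real-bad neighbor $w$ of $v$ that is good in $C_u$ and therefore (for $C_u$-consistency) reports $v$ as good, which would become inconsistent once $v$ is flipped. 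This is excluded by the same observation the paper makes explicitly: $v$ is good in $C_u$ and reports truthfully, so every real-bad neighbor of $v$ is already bad in $C_u$; with that one line added, your argument is complete. For the lower bound proper, the paper argues by contradiction on the number $a$ of corrupted clique vertices and the surviving clique component, whereas you give a direct count: every consistent configuration is pinned to the real partition on the clique (via $h-(b-k)>h/2>b$), the $G$-part contributes $g^*_G(k)=\max\{g : m(G,g)>k\}$ identifiable nodes, and the key inequality $g^*_G(k)+k\ge m(G)$ closes the count. Mathematically this is the paper's argument (your $b-k$ is its $a$), but your formulation makes the decoupling of the clique from $G$ and the role of the budget split more explicit, which is a modest gain in transparency rather than a genuinely different method.
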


\begin{proof}
Firstly, observe that $$\delta |V'|  = \delta(|V|+h) = \delta (|V| + \delta/(1-\delta)|V|)= h.$$ Therefore, one way to prevent identification of  $\delta |V'|+1$ good nodes on $G'$ is to prevent identification of one good node on $G$. Since the $h$-clique is of size at least $|V'|/2$, and report each other as good, they will be detected as good nodes. This strategy requires bad party to have budget $b=m(G).$ Thus $m(G',\delta|V'| + 1)\leq m(G)$.

The direction $m(G',\delta|V'| + 1)\geq m(G)$ follows by the fact that the strategy above is optimal. In order to prove this, we make the following observation: 

\begin{claim}
Given any graph $G$ and $g\leq |V|,$
$$m(G) \leq m(G,g)+g-1$$
\label{claim:4}
\end{claim}

\begin{proof}[Proof of Claim \ref{claim:4}]
One way to prevent identification of one good node is to corrupt $m(G,g)$ nodes plus the (at most) $g-1$ detected good nodes. Call the set of the $g-1$ or fewer detected nodes $S$. Notice that any node in $G\backslash S$ that is adjacent to $S$ are reported as bad by $S$. If not, this node has the same identity with $S$, and should be detected as good as well. Therefore, the bad party is able to corrupt the set $S$ without incurring any change in the reports, since all edges incident to $S$ now have both endpoints corrupt and so the reports are arbitrary. Previously, the set $S$ were good in any configuration of identities consistent with the reports and the budget. But now, the bad party's budget increases by at least $g-1 \geq |S|$, and any configuration with the set $S$'s identity changed to all bad is also consistent with the reports. 

Therefore, no node is good in all configurations, and so no node can be detected as good. This strategy requires $m(G,g)+g-1$ nodes and prevents identification of one good node. Since $m(G)$ is the minimal number of bad nodes so that it is impossible to detect one good node, $m(G)\leq m(G,g)+g-1$.
\end{proof}

Now we continue to prove the $m(G',\delta|V'|+1)\geq m(G)$ direction of Claim \ref{claim:3}. Assume towards contradiction that there exists a strategy that controls at least one node in the $h$-clique, prevents identification of $h+1$ good nodes, and requires fewer than $m(G)$ bad nodes in total. Suppose this strategy assigns $a$ nodes in the $h$-clique as bad, where $1<a < m(G) \leq |V|/2\leq h/2$. Then $h-a > h/2 > m(G) >b$. Therefore, the rest of the $h$-clique forms a connected component with only good reports, and is of size $h-a$, which is larger than the bad party's budget $b<m(G)$, thus are declared as good. As a result, the bad party must prevent identification of $a+1$ good nodes in $G$ with budget strictly less than $m(G)-a$. This contradicts the fact that $m(G)-a \leq m(G,a) - 1 < m(G,a+1)$ by Claim~\ref{claim:4}.  

Therefore, the strategy of controlling $m(G)$ nodes on $G$ and let the $h$-clique be detected as good is an optimal strategy, $m(G',\delta|V'|+1) = m(G)$.
\end{proof}

Now, with Claim \ref{claim:3}, we conclude that for any $1/2\leq\delta<1$, approximating $m(G,\delta |V|)$ within any constant must be SSE-hard. If not, we will obtain an efficient algorithm for approximating $m(G)$ by constructing a graph $G'$ by adding a $\frac{\delta}{1-\delta}|V|$ clique to any graph $G$, for some $\delta$, and approximate $m(G)$ by approximating $m(G',\delta|V'|+1),$ which is just $m(G',\delta'|V'|)$ for some other $0<\delta'<1$.

Theorem \ref{thm:sse_hard} implies a similar corollary about the SSE-hardness of seeding the nodes on a graph $G$ given any constant multiple of the critical number $m(G,\delta|V|)$ to prevent detection of any arbitrary fraction of good nodes.

\begin{corollary}
Assume the SSE Hypothesis and P $\neq$ NP. Fix any $\beta > 1,$ and $0<\delta<1$. There does not exist a polynomial-time algorithm that takes as input an arbitrary graph $G = (V,E)$ and outputs a set of nodes $S$ with size $|S|\leq O(\beta \cdot m(G,\delta|V|))$, such that corrupting $S$ prevents the central agency from finding $\delta |V|$ truthful nodes.
\label{corr:2}
\end{corollary}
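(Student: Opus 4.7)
The plan is to derive this corollary from Theorem~\ref{thm:sse_hard} via a short reduction argument, mirroring the relationship between Theorem~\ref{thm:2} and Corollary~\ref{corr:1}. I would suppose for contradiction that there exists a polynomial-time algorithm $\mathcal{A}$ which, on input any graph $G$, outputs a valid corrupting set $S$ (one whose corruption prevents the central agency from identifying $\delta |V|$ truthful nodes) of size $|S| \leq C\beta \cdot m(G, \delta |V|)$, where $C$ denotes the constant hidden in the big-$O$ notation.

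The key observation is that the output of $\mathcal{A}$ furnishes a polynomial-time two-sided bound on $m(G, \delta |V|)$: since $S$ is a valid corrupting set, the definition of $m(G, \delta |V|)$ forces $m(G, \delta |V|) \leq |S|$, while the guarantee of $\mathcal{A}$ gives $|S| \leq C\beta \cdot m(G, \delta |V|)$. Consequently, the value $|S|$ is a polynomial-time $(C\beta)$-approximation to $m(G, \delta |V|)$.

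Next, I would invoke Theorem~\ref{thm:sse_hard} with approximation factor $\beta^{\star} := 2C\beta$ (any constant strictly exceeding $C\beta$ would do), using the same value of $\delta$. That theorem produces a constant $\epsilon > 0$ for which distinguishing $m(G, \delta |V|) \leq \epsilon |V|$ from $m(G, \delta |V|) \geq \beta^{\star} \epsilon |V|$ is SSE-hard. But $|S|$ already decides this dichotomy: in the first case $|S| \leq C\beta \cdot \epsilon |V| < \beta^{\star} \epsilon |V|$, while in the second $|S| \geq m(G, \delta |V|) \geq \beta^{\star} \epsilon |V|$. So thresholding $|S|$ at, say, $\beta^{\star} \epsilon |V|$ solves an SSE-hard problem in polynomial time, which (together with P $\neq$ NP ruling out trivial collapses) contradicts the SSE Hypothesis.

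The proposal has essentially no genuine obstacle beyond bookkeeping the order of quantifiers: one must first fix $\beta$ and $\delta$ (matching the statement of the corollary), then read off the implicit constant $C$ from the algorithm's guarantee, and only then choose the approximation factor in Theorem~\ref{thm:sse_hard} to exceed $C\beta$. No new structural insight is required, since the hard work, namely the gap construction via the bipartite auxiliary graph $G'$ together with Lemmas~\ref{lemma:yes} and~\ref{lemma:no}, has already been carried out in the proof of Theorem~\ref{thm:sse_hard}.
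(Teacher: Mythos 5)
Your proposal is correct and is exactly the derivation the paper intends: it states Corollary~\ref{corr:2} as an immediate consequence of Theorem~\ref{thm:sse_hard}, and your argument (the output set's size sandwiches $m(G,\delta|V|)$ between $|S|$ and $|S|/(C\beta)$, so thresholding $|S|$ decides the SSE-hard gap instance) is precisely that implication, with the quantifiers handled properly by fixing $\beta,\delta$ first and then invoking the theorem with factor $\beta^\star = 2C\beta$.
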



\section{Acknowledgements}
We would like to thank Paxton Turner and Vishesh Jain for useful conversations about corruption detection over the past year. We would like to thank Pasin Manurangsi for pointing us to the Austrin-Pitassi-Wu inapproximability result for treewidth.

\bibliographystyle{alpha}
\bibliography{sample}

\appendix
\section{Omitted Results}
We give an NP-hardness result for computing $\min_{k}{S_G(k)+k}$ exactly. Note that this is insufficient to say anything about corruption detection, as $\min_{k}{S_G(k)+k}$ only gives a 2-approximation to the critical number $m(G)$, but we include this observation here as it may be of independent interest.
\begin{theorem}
It is NP hard to compute $\min_{k}{S_G(k)+k}$ exactly.
\label{thm_hardness}
\end{theorem}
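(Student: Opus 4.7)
The plan rests on a reformulation. I would first observe that $\min_k(S_G(k) + k)$ coincides with the classical \emph{vertex integrity} invariant,
\[
I(G) := \min_{S \subseteq V}\bigl(|S| + \ell(G - S)\bigr),
\]
where $\ell(H)$ denotes the order of the largest connected component of $H$. This equivalence is a direct double inequality: on the one hand, any $S$ with $\ell(G - S) = c$ is itself a valid $c$-vertex separator, so $S_G(c) + c \leq |S| + c$; on the other hand, an optimal $k$-vertex separator of size $S_G(k)$ witnesses $|S| + \ell(G - S) \leq S_G(k) + k$ for its particular $k$. Hence the task reduces to proving that computing the vertex integrity of a graph is NP-hard.

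The plan for the reduction is to reduce from a standard NP-hard combinatorial problem such as 3-\textsc{Partition}, \textsc{Clique}, or \textsc{Vertex Cover}. A typical route uses 3-\textsc{Partition}: given $3m$ items $a_1,\ldots,a_{3m} \in (B/4, B/2)$ summing to $mB$, decide whether they admit a partition into $m$ triples each summing to $B$. From such an instance, one constructs an auxiliary graph $G$ built from path or spider gadgets of lengths $a_i$ attached to a small ``selector'' structure (for example, a short clique on $m-1$ vertices), calibrated so that removing the selector partitions the remainder of $G$ into groups of total size $B$. With the right parameters, $I(G)$ attains a specific target value (roughly $m-1+B$) if and only if the items admit a valid 3-partition. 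Alternative graph-blow-up or pendant-clique reductions from \textsc{Vertex Cover} are also possible and have appeared in the integrity literature.

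The main obstacle is the combinatorial delicacy of any such reduction: one must argue via an exchange argument that every integrity-optimal separator of $G$ can be ``straightened'' into a canonical form directly encoding a solution of the source problem, which in the 3-\textsc{Partition} case means cutting only through the selector while leaving each path gadget intact. This step must rule out cheaper alternative separators that, say, partially slice through a path gadget or leave a selector vertex in place; the middle-bound condition $B/4 < a_i < B/2$ of 3-\textsc{Partition} is typically what prevents such alternatives from being competitive once gadget sizes are large relative to $m$. With this calibration, YES instances yield $I(G)$ exactly at the target and NO instances yield $I(G)$ strictly larger, and NP-hardness of computing $\min_k(S_G(k)+k)$ follows. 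As the authors themselves flag, this result is of independent graph-theoretic interest but has no direct consequence for corruption detection, since $\min_k(S_G(k)+k)$ only approximates $m(G)$ to within a factor of two.
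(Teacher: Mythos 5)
Your reformulation of $\min_k(S_G(k)+k)$ as the vertex integrity $I(G)=\min_{S\subseteq V}(|S|+\ell(G-S))$ is correct, and the double inequality you give establishes it cleanly. However, as written your argument has a genuine gap: the actual NP-hardness reduction is never carried out. Everything after the reformulation is a plan --- ``a typical route uses 3-\textsc{Partition}\ldots with the right parameters $I(G)$ attains a specific target value (roughly $m-1+B$)'' --- and you yourself flag the decisive step (the exchange argument showing every optimal separator can be straightened into a canonical form that cuts only the selector and leaves the gadgets intact) as the main obstacle without resolving it. No gadget is fully specified, no target value is verified in either direction, and the calibration ``gadget sizes large relative to $m$'' is not pinned down; moreover, since your gadgets have size proportional to the item values $a_i$, you would also need to invoke the \emph{strong} NP-hardness of 3-\textsc{Partition} (unary encoding) for the construction to be polynomial, which you do not mention. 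The one way to close the gap cheaply would be to cite the known NP-completeness of vertex integrity (Clark--Entringer--Fellows), but your proposal neither cites that result as a theorem nor reproves it, so the central claim remains unestablished.

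For comparison, the paper takes a different and self-contained route: it reduces from the minimum $k$-vertex separator problem itself (NP-hard, citing Lee '17). Given $G$ and a target $M$, it pads the graph with $n^2$ disjoint $M$-cliques (forcing any competitive solution to use component size at least $nM$) and appends an $(n-1)$-clique to every vertex (blowing each vertex up by a factor of $n$), and then proves that for the resulting graph $G''$ one has $\min_k(S_{G''}(k)+k)=S_G(M)+nM$ attained at $k=nM$. Thus an exact algorithm for $\min_k(S_G(k)+k)$ would let one read off $S_G(M)$ for every $M$, contradicting hardness of $k$-vertex separator. Your vertex-integrity framing is arguably more natural and connects to existing literature, but to be a proof it needs either a complete gadget reduction with the exchange argument executed, or an explicit appeal to the known hardness of vertex integrity.
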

\begin{proof}
It is known that finding $k$-vertex separator for a graph is NP hard \cite{Lee17}. We present a reduction of the problem of computing $\min_k{S_G(k)+k}$ to the $k$-vertex separator problem.

Assume towards contradiction that there is a polynomial-time algorithm $\mathcal{A}$ for finding $\min_k S_G(k)+k$. Then for any graph $G$ and any $M<|V|$, the minimal $M$-vertex separator of the graph $G=(V,E)$ can be found in the following way. Construct a graph $G'=(V',E'),$ where $$G'=G\cup \{n^2 \textrm{ disjoint M-cliques}\} ,$$
with $n \gg N:=|V|.$
Construct a second auxiliary graph $G''=(V'',E''),$ such that

$$G''=G'\cup \{kn+N \textrm{ disjoint } (n-1)\textrm{-cliques appended to each vertex of V'}\}. $$

Each $(n-1)$-clique is appended to a vertex of $G'$ in the sense that each node of the clique is connected to the vertex in $G'$ with an edge. The idea is to make each vertex in $G'$ "$n$ times larger".

Run the polynomial-time algorithm $\mathcal{A}$ for finding $\min_k S_{G''}(k)+k$ on graph $G''$. The algorithm outputs a vertex set $S'' \subseteq V'',$ which divides $G''$ into connected components of with maximal size $k''$. 

\begin{lemma}
Let $G''$ be as constructed above, $k''$ and $S''$ be the output given by an algorithm that computes $\min_{k}S_{G''}(k)+k$. Then $k''=nM$, and without loss of generality, the subset $S''$ contains only vertices from the original graph $G$. In other words, finding $\min_k S(k)+k$ of $G''$ is equivalent to finding the $M$-vertex separator of $G$. i.e.,
$$\arg\min_k  S_{G''}(k)+k = nM, $$ $$\min_k S_{G''}(k)+k = S_G(M)+nM.$$
\label{claim_eq}
\end{lemma}

\begin{proof} [Proof of Lemma \ref{claim_eq}]
Let $f_{G''} (k):= S_{G''}(k) + k$, and let $f^*_{G''}: =\min_k f_{G''}(k)$. Note there exists following upper bound for $f^*_{G''}$.

$$f^*_{G''} \leq S_G(M) + nM$$

This is achieved by removing the $M$-vertex separator of $G$ from $G''$ and divide $G''_{V'' \backslash S_G(M)}$ into connected components with size at most $nM$.

Now we prove that $f^*_{G''}$ has to be exactly $S_G(M) + nM$ by showing that $f_{G''} (k) > f^*_{G''}$ for $k>nM$, and for $k<nM$.

\begin{enumerate}
\item $f_{G''} (k) > f^*_{G''}$ for all $k<nM$.

For $k<nM$:
$$f_{G''} (k) \geq n^2 + k >S_G(M) +nM, $$

because the separator has to include at least one vertex from each of the $n^2$ disjoint $nM$-cliques in $G''$. This value $f_{G''} (k) $ is clearly larger than $S_G(M)+nM$ when $n\gg N>M$.

\item $f_{G''} (k) > f^*_{G''}$ for all $k>nM$.

\begin{claim}
We claim that it suffices to only consider $k$ in the form of $k = nM + n\alpha$, where $\alpha \in \mathbb{Z}_+$. i.e. for any $k>nM$, $f_{G''} (k) \geq f_{G''}(nM+n\alpha)$ for some $\alpha \in \mathbb{Z}_+$. 
\label{subclaim}
\end{claim} 
\begin{proof}[Proof of Claim \ref{subclaim}]
Call the nodes in $G$ to which each of the $n$-clique is appended to (while constructing $G''$) the \textbf{center} of the $n$-clique in $G''$. If $k$ cannot be expressed in the form of $nM + n\alpha,$ this means the corresponding separator $S$ contain some non-center nodes of the $n$-cliques in $G''$. 

If the center $\not\in S$, while some other node(s) of the clique $\in S$, there exists another $S^*$, $|S^*|<|S|$ that includes the center instead of the other node(s), and suffice to be a $k$-vertex separator. This is because after the removal of the center node, the rest of the clique can be of size at most $(n-1)$, and $k>nM>n-1$.

Suppose the center $\in S$, while some of the other node(s) of the clique also $\in S$, in order to obtain a $k$-vertex separator. Then $S^*$ that only contains center will suffice to be $k$-vertex separator, because $k>n$. 
\end{proof}

By Claim \ref{subclaim}, for any $k>nM$, $f_{G''} (k) \geq f_{G''}(nM+n\alpha)$ for some $\alpha \in \mathbb{Z}_+$. In words, there is never any incentive to include any non-center nodes of an $n$-cliques in separator $S$. Without loss of generality, $S\subseteq V_G$, and $k = nM+n\alpha \geq nM + n$. 

$$f_{G''} (nM+n\alpha) > nM+n  > S_G(M) +nM$$

when $n\gg N$.

\end{enumerate}

Summarizing 1 and 2, we conclude that
\begin{equation*}
f^*_{G''} = f_{G''}(nM) = S_G(M) + nM
\end{equation*}
\end{proof}

This gives us a polynomial algorithm to find 
any $M$-vertex separator for any graph $G$, and any value $M$.
This contradicts the fact that computing $M$-vertex separator is NP-hard. Therefore, there does not exist polynomial time algorithm for computing $\min S_G(k)+k$.
\end{proof}

\end{document}